\theoremstyle{plain}
\newtheorem{proposition}{Proposition}
\newtheorem{lemma}{Lemma}
\newtheorem{theorem}{Theorem}
\newtheorem{corollary}{Corollary}
\newcommand{\fixedhat}[1]{\rlap{$\hat{\phantom{#1}}$}#1}
\newcommand{\fixedbreve}[1]{\rlap{$\breve{\phantom{#1}}$}#1}
\newcommand{\fixedhatbreve}[1]{\rlap{$\hat{\breve{\phantom{#1}}}$}#1}
\font\SYM=msbm10
\newcommand{\Real}{\mbox{\SYM R}}
\font\tenscr=rsfs10 scaled1100
\font\sevenscr=rsfs7 
\font\fivescr=rsfs5 
\begin{document}


\title{\textbf{Approximate twistors and positive mass}}

\author{{\Large Thomas B\"ackdahl} \thanks{E-mail address:
{\tt t.backdahl@qmul.ac.uk}} \\
\vspace{5mm}
{\Large Juan A. Valiente Kroon} \thanks{E-mail address:
{\tt j.a.valiente-kroon@qmul.ac.uk}}\\
School of Mathematical Sciences,\\
 Queen Mary University of London, \\
Mile End Road, London E1 4NS, UK.}

\maketitle

\begin{abstract}
In this paper the problem of comparing initial data to a reference
solution for the vacuum Einstein field equations is considered. This
is not done in a coordinate sense, but through quantification of the
deviation from a specific symmetry. In a recent paper [T. B\"ackdahl,
J.A. Valiente Kroon, Phys. Rev. Lett. \textbf{104}, 231102 (2010)]
this problem was studied with the Kerr solution as a reference
solution. This analysis was based on valence 2 Killing spinors. In
order to better understand this construction, in the present article
we analyse the analogous construction for valence 1 spinors solving
the twistor equation. This yields an invariant that measures how much
the initial data deviates from Minkowski data. Furthermore, we prove
that this invariant vanishes if and only if the mass vanishes. Hence,
we get a proof of the positivity of mass.
\end{abstract}

\section{Introduction}

The idea that constructions involving spinor fields are a valuable
tool for the analysis of global properties of 3-manifolds was brought
to the fore by Witten's proof of the positivity of the mass
\cite{Wit81}. Witten's ideas were then used to prove the positivity of
other notions of mass like Bondi's ---see
e.g. \cite{LudVic81,LudVic82}.

\medskip
In a recent paper \cite{BaeVal10a} spinorial equations have motivated
the construction of a new geometric invariant for initial data sets
for the vacuum Einstein field equations ---see also \cite{BaeVal10b}
for a more detailed exposition. The motivation behind this
construction is to develop a method to compare spacetimes (and initial
data sets thereof) in a gauge-independent, coordinate-free manner. The
key idea is to carry out this comparison by quantifying how much the
spacetime (or its initial data) fail to have a particular
symmetry. Following this strategy, it is possible to obtain a
geometric invariant with the property of vanishing if and only if the
initial data set corresponds to initial data for the Kerr spacetime
---thus, it measures the \emph{non-Kerrness} of the data. 

\medskip
The starting point of the non-Kerrness is the notion of valence 2
Killing spinors. These are spinorial fields
$\kappa_{AB}=\kappa_{(AB)}$ satisfying the equation
\[
\nabla_{A'(A}\kappa_{BC)}=0.
\]
The existence of such spinorial fields in the development of
initial data sets for the vacuum Einstein field equations can be
encoded at the level of the data via the so-called \emph{Killing
spinor initial data equations} ---see
\cite{BaeVal10a,BaeVal10b,GarVal08a}. These equations include among
others the so-called \emph{spatial Killing spinor equation} (an
explanation of the notation is given in the sequel)
\begin{equation}
\label{SpatialKillingSpinorEquation}
\nabla_{(AB} \kappa_{CD)}=0.
\end{equation}
This equation admits, for a generic initial data set for the Einstein
field equations, only the trivial solution. The key insight of
\cite{BaeVal10a,BaeVal10b} was that if one composes the differential
operator of this equation with its formal adjoint one obtains an
elliptic equation which with the appropriate boundary conditions can
be shown to always have a solution. If one evaluates the resulting
spinor in the Killing spinor initial data equations one obtains a
quantitative measure of the deviation from the existence of a symmetry
in the data. The existence of a Killing spinor is a strong
condition to be imposed on a vacuum spacetime. As a consequence, it
turns out that the
construction described in this paragraph can be used to provide a
characterisation of data for the Kerr spacetime.

\medskip
The purpose of the present article is to shed light into the
construction of \cite{BaeVal10a,BaeVal10b} by analysing an analogous
construction motivated by the \emph{twistor equation} (or
\emph{valence 1 Killing spinor equation}):
\[
\nabla_{A'(A}\kappa_{B)}=0. 
\]
In this case, the analogue of equation
\eqref{SpatialKillingSpinorEquation} is the \emph{spatial twistor
equation}
\[
\nabla_{(AB} \kappa_{C)}=0.
\]
This equations has been used to provide conditions on a 3-manifold to
be embeddable in a conformally flat spacetime ---see \cite{Tod84}. The
analogous procedure of \cite{BaeVal10a,BaeVal10b} then produces
an invariant that turns out to be related to the mass.

\medskip
It should be emphasised that in contrast to the analysis of
\cite{BaeVal10a,BaeVal10b} which could be performed, to some greater
length, using tensorial methods, the present discussion is
intrinsically spinorial.

\subsection*{Outline of the article}
Section \ref{Section:Basics} provides a brief discussion of the theory
of spacetimes with solutions to the so-called twistor equation. In
particular, it provides a characterisation of the Minkowski spacetime
in terms of the existence of a specific solution to this
equation. Section~\ref{Section:SpaceSpinors} provides a short overview
of the space spinor formalism to be used in our
analysis. Section~\ref{Section:TwistorData} is concerned with the
question of how to encode in an initial data set that its development
will have a solution to the twistor
equation. Section~\ref{Section:ApproximateTwistor} introduces the
approximate twistor equation: an elliptic equation which with suitable
conditions always admits a unique solution for asymptotically
Euclidean initial data sets ---see Theorem
\ref{Theorem:ExistenceTW}. Whereas Section
\ref{Section:ApproximateTwistor} is concerned with formal elliptic
properties of the equation, Section
\ref{Section:AsymptoticallyEuclideanData} discusses its solvability
for asymptotically Euclidean manifolds. Section
\ref{Section:Invariant} presents a characterisation of Minkowski
initial data by means of a geometric invariant constructed out of the
solution to the approximate twistor equation provided by Theorem
\ref{Theorem:ExistenceTW}. Finally, in Section
\ref{Section:ConnectionMass} we discuss the connection between our
invariant and the mass of the data.

\subsection*{General notation and conventions}
All throughout, $(\mathcal{M},g_{\mu\nu})$ will be an orientable and
time orientable globally hyperbolic vacuum spacetime. Here, and in
what follows, $\mu,\,\nu,\cdots$ denote abstract 4-dimensional tensor
indices. The metric $g_{\mu\nu}$ will be taken to have signature
$(+,-,-,-)$. Let $\nabla_\mu$ denote the Levi-Civita connection of
$g_{\mu\nu}$. The sign of the Riemann tensor will be given by the
equation
\[
\nabla_\mu\nabla_\nu\xi_\zeta-\nabla_\nu\nabla_\mu\xi_\zeta=R_{\nu\mu\zeta}{}^\eta\xi_\eta
\]

\medskip
 The triple $(\mathcal{S}, h_{ab},K_{ab})$ will denote initial data on
a hypersurface of the spacetime $(\mathcal{M},g_{\mu\nu})$. The
symmetric tensors $h_{ab}$, $K_{ab}$ will denote, respectively, the
3-metric and the extrinsic curvature of the 3-manifold
$\mathcal{S}$. The metric $h_{ab}$ will be taken to be negative
definite ---that is, of signature $(-,-,-)$. The indices
$a,\,b,\ldots$ will denote abstract 3-dimensional tensor indices,
while $i,\,j,\ldots$ will denote 3-dimensional tensor coordinate
indices.  Let $D_a$ denote the Levi-Civita covariant derivative of
$h_{ab}$.

\medskip
Spinors will be used systematically. We follow the conventions of
\cite{PenRin84}.  In particular, $A,\,B,\ldots$ will denote abstract
spinorial indices, while $\mathbf{A}, \,\mathbf{B},\ldots$ will be
indices with respect to a specific frame. Tensors and their spinorial
counterparts are related by means of the solder form
$\sigma_\mu{}^{AA'}$ satisfying
$g_{\mu\nu}=\sigma_\mu^{AA'}\sigma_\nu^{BB'}
\epsilon_{AB}\epsilon_{A'B'}$, where $\epsilon_{AB}$ is the
antisymmetric spinor and $\epsilon_{A'B'}$ its complex conjugate
copy. One has, for example, that $\xi_\mu = \sigma_{\mu}{}^{AA'}
\xi_{AA'}$.  Let $\nabla_{AA'}$ denote the spinorial counterpart of
the spacetime connection $\nabla_\mu$. Besides the connection
$\nabla_{AA'}$, two other spinorial connections will be used:
$D_{AB}$, the spinorial counterpart of the Levi-Civita connection
$D_a$ and $\nabla_{AB}$ a Sen connection. Apart from these derivatives we will also use the normal derivative $\nabla \equiv\tau^\mu\nabla_\mu$. Full details will be given in Section~\ref{Section:SpaceSpinors}.

\section{The twistor equation: general theory}
\label{Section:Basics}

A \emph{valence 1 Killing spinor} is a spinor
$\kappa_A$ satisfying the \emph{twistor equation}
\begin{equation}
\label{TwistorEquation} 
\nabla_{A'(A} \kappa_{B)}=0.
\end{equation} 
Taking a further derivative of equation \eqref{TwistorEquation}, antisymmetrising and
commuting the covariant derivatives one finds the integrability condition
\begin{equation}
\Psi_{ABCD}\kappa^{D}=0, \label{IntegrabilityCondition}
\end{equation}
where $\Psi_{ABCD}$ denotes the self-dual Weyl spinor. The above
integrability imposes strong restrictions on the algebraic type of
the Weyl spinor as it states that the vacuum has to be of Petrov type
N or O.

\subsection{A characterisation of the Minkowski spacetime}
We will use valence 1 Killing spinors to formulate a characterisation of the Minkowski spacetime.
\begin{proposition}
\label{Proposition:CharacterisationMinkowski}
Assume that $\kappa_A$ is a solution to $\nabla_{A'(A} \kappa_{B)}=0$
in an asymptotically flat vacuum spacetime. Let
$\eta_A\equiv\nabla_A{}^{A'}\bar\kappa_{A'}$. If $\eta_A\neq 0$ at
some point, then the spacetime is isomorphic to the Minkowski
spacetime.
\end{proposition}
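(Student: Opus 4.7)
The plan is to show that the hypotheses force $\eta_A$ to be a nowhere-vanishing parallel spinor, which in turn forces the Weyl tensor to vanish, so that the asymptotically flat vacuum spacetime must be Minkowski.

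First I would rewrite the twistor equation in the equivalent form $\nabla_{AA'}\kappa_B=\tfrac{1}{2}\epsilon_{AB}\mu_{A'}$ with $\mu_{A'}$ the appropriately normalised trace of $\nabla\kappa$; complex conjugation gives $\nabla_{AA'}\bar\kappa_{B'}=\tfrac{1}{2}\epsilon_{A'B'}\bar\mu_A$, and a short $\epsilon^{A'B'}$-contraction identifies $\eta_A=-\bar\mu_A$. Differentiating the twistor equation once more and antisymmetrising produces the well-known first integrability condition $\Psi_{ABCD}\kappa^D=0$; so, wherever $\kappa\neq 0$, the Weyl spinor is of Petrov type N or O with $\kappa^A$ as fourfold principal spinor.

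The key step is to commute derivatives on $\kappa$ once more. In vacuum one has $[\nabla_{AA'},\nabla_{BB'}]\kappa_C=\epsilon_{A'B'}\Psi_{ABCD}\kappa^D$, which vanishes by the integrability. Substituting the twistor equation into both sides of $\nabla_{CC'}\nabla_{AA'}\kappa_B=\nabla_{AA'}\nabla_{CC'}\kappa_B$ then gives the algebraic identity $\epsilon_{AB}\nabla_{CC'}\mu_{A'}=\epsilon_{CB}\nabla_{AA'}\mu_{C'}$; tracing over $AB$ yields $2\nabla_{CC'}\mu_{A'}=\nabla_{CA'}\mu_{C'}$, and swapping $A'\leftrightarrow C'$ and substituting back forces $\nabla_{AA'}\mu_{B'}=0$. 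Complex conjugation then gives $\nabla_{AA'}\eta_B=0$, so $\eta_A$ is covariantly constant; in particular $\eta_A\neq 0$ at one point propagates to $\eta_A\neq 0$ everywhere on $\mathcal{M}$.

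Finally, applying the vacuum commutator to the parallel spinor $\eta_C$ gives $\Psi_{ABCD}\eta^D=0$, so $\eta^A$ is also a fourfold principal spinor of $\Psi$. If $\kappa$ were proportional to $\eta$ everywhere, writing $\kappa=f\eta$ and inserting into the twistor equation would force $\nabla_{AA'}f\propto\eta_A$ and then $f$ to be constant, hence $\mu_{A'}=0$, contradicting the hypothesis $\eta\neq 0$. Therefore there is a dense set on which $\kappa$ and $\eta$ are linearly independent; since the totally symmetric Weyl spinor cannot admit two distinct fourfold principal spinors, $\Psi_{ABCD}=0$ there, and hence on all of $\mathcal{M}$ by analyticity of vacuum spacetimes. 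A vacuum spacetime with vanishing Weyl tensor is flat, and under asymptotic flatness such a spacetime is isomorphic to Minkowski. I expect the main obstacles to be the careful $\epsilon$-bookkeeping leading to $\nabla\mu=0$ and the invocation of the global rigidity statement that upgrades pointwise flatness to a genuine isometry with Minkowski space.
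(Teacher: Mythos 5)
Your proof follows essentially the same route as the paper: derive the integrability condition $\Psi_{ABCD}\kappa^D=0$, show that $\eta_A$ is covariantly constant (hence nowhere zero once nonzero at a point), deduce $\Psi_{ABCD}\eta^D=0$, argue that $\kappa_A$ and $\eta_A$ span the spin space generically, and conclude $\Psi_{ABCD}=0$, flatness, and finally Minkowski via asymptotic flatness. Your derivation of $\nabla_{AA'}\mu_{B'}=0$ by commuting second derivatives of $\kappa_A$ is a legitimate alternative packaging of the paper's route via $\square\kappa_A=0$ and $\nabla_{A'(A}\eta_{B)}=0$; both amount to the standard prolongation of the twistor system in vacuum.

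The one genuine gap is in the step producing a \emph{dense} set of linear independence. You rule out $\kappa_A=f\eta_A$ holding on all of $\mathcal{M}$, but the negation of that statement only yields a nonempty open set on which $\kappa$ and $\eta$ are independent, not a dense one; and you need density (plus continuity) to propagate $\Psi_{ABCD}=0$ to all of $\mathcal{M}$. The repair is to run your proportionality argument on an arbitrary open set $U$ on which $\eta^A\kappa_A=0$: there $\kappa_A=f\eta_A$, the twistor equation forces $\nabla_{AA'}f=\eta_A\lambda_{A'}$, and comparing the symmetric and antisymmetric parts in $AB$ of $\nabla_{AA'}\kappa_B=\eta_A\eta_B\lambda_{A'}$ forces both $\lambda_{A'}=0$ and $\mu_{A'}=0$ on $U$, hence $\eta_A=0$ on $U$ and therefore (being parallel) everywhere --- a contradiction. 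The paper achieves this in one line via the identity $\nabla_{AA'}(\eta^B\kappa_B)=\tfrac{1}{2}\eta_A\bar\eta_{A'}$, which shows directly that the zero set of $\eta^A\kappa_A$ has empty interior. Separately, your appeal to ``analyticity of vacuum spacetimes'' is both unnecessary and unavailable in general (generic vacuum spacetimes are smooth but not analytic); continuity of $\Psi_{ABCD}$ together with density of the set where it vanishes is all that is required.
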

\begin{proof}
As the twistor equation holds, then the integrability condition
\eqref{IntegrabilityCondition} is satisfied.  Similar calculations
show that $\square\kappa_A=0$ and $\nabla_{A'(A}\eta_{B)}=0$.
Therefore we also have $\Psi_{ABCD}\eta^D=0$.  Furthermore we find
$\nabla^A{}_{A'}\eta_A=\tfrac{1}{2}\square\bar\kappa_{A'}=0$. This
means that $\nabla_{AA'}\eta_A=0$. Assume now that $\eta_A\neq 0$ at
some point. Then one has that $\eta_A\neq 0$ everywhere. 

We will now prove that the set $\eta^A\kappa_A=0$ does not have any interior points. We do this by contradiction. Assume $\eta^A\kappa_A=0$ in a neighbourhood of a point $p$. By the relation
$\nabla_{AA'}(\eta^B\kappa_B)=\tfrac{1}{2}\eta_A\bar\eta_{A'}$
we then see that also $\eta_A\bar\eta_{A'}=0$ in a neighbourhood of $p$. This contradicts $\eta_A\neq 0$. Hence, the set $\eta^A\kappa_A=0$ does not have any interior points, thus this set has measure zero.
This means that we can use $\{\eta_A,\kappa_A\}$ locally as a dyad on some neighbourhood around almost every point in the manifold. From the conditions $\Psi_{ABCD}\kappa^D=0$, $\Psi_{ABCD}\eta^D=0$ we conclude that $\Psi_{ABCD}=0$ at almost every point. By continuity we get
$\Psi_{ABCD}=0$ at every point on the manifold. Hence, the spacetime
is conformally flat. Together with asymptotic flatness and the vacuum
field equations, we get that the spacetime is flat, i.e. isomorphic to
the Minkowski spacetime.
\end{proof}

\section{Space spinors: general theory}
\label{Section:SpaceSpinors}
The analysis in this article is based on an analysis of the space
spinor split of equation \eqref{TwistorEquation}. Here we follow the
conventions and notations introduced in
\cite{BaeVal10a,BaeVal10b}. 

\subsection{Basic definitions}
Let $\tau^\nu$ be the future pointing vector tangent to a
congruence of timelike curves and let $\tau^{AA'}$ denote its
spinorial counterpart. We will use the normalization $\tau_{AA'}\tau^{AA'}=2$. Also, let $K_{ABCD}$ denote the spinorial
counterpart of the second fundamental form $K_{ab}$. Furthermore, let
\[
\Omega_{ABCD}\equiv K_{(ABCD)}, \quad K=K_{PQ}{}^{PQ}. 
\]
The Sen connection associated to the congruence defined by
$\tau^{AA'}$ is given by
\[
\nabla_{AB} \equiv \tau_{(A}{}^{A'} \nabla_{B)A'}.
\]
The latter can be written in terms of the intrinsic spinorial Levi-Civita covariant derivative $D_{AB}$ and the spinor $K_{ABCD}$. For example, given a valence 1 spinor $\pi_A$ one has that
\[
\nabla_{AB} \pi_C = D_{AB} \pi_C + \frac{1}{2} K_{ABC}{}^Q\pi_Q,
\]
with the obvious generalisations to higher valence spinors.
Furthermore we denote the normal derivative with
$\nabla\equiv\tau^{AA'}\nabla_{AA'}$. Observe that one can tell the
different derivatives apart by their indices. The spacetime derivative
$\nabla_{AA'}$ has one unprimed index and one primed index, whereas
the Sen connection $\nabla_{AB}$ has two unprimed indices. The normal
derivative $\nabla$ has no indices.

\medskip
Essential for our analysis is the notion of Hermitian conjugation.
Again, given the spinor $\pi_{A}$, we define its \emph{Hermitian
conjugate} via
\[
\hat{\pi}_{A} \equiv \tau_{A}{}^{E'}\bar{\pi}_{E'}.
\] 
The Hermitian conjugate can be extended to higher valence symmetric spinors in the
obvious way.  The symmetric spinors $\nu_{AB}$ and $\xi_{ABCD}$ are said to be
real if 
\[
\hat{\nu}_{AB}=-\nu_{AB},\quad \hat{\zeta}_{ABCD}=\zeta_{ABCD}.
\]
It can be verified that $\nu_{AB} \hat{\nu}^{AB}, \; \zeta_{ABCD}
\hat{\zeta}^{ABCD}\geq 0$. If the spinors are real, then there exist
real spatial tensors $\nu_a$, $\xi_{ab}$ such that $\nu_{AB}$ and
$\xi_{ABCD}$ are their spinorial counterparts. For symmetric spinors
with an odd number of indices like $\kappa_A$, $\xi_{ABC}$ there is no
corresponding notion of reality. However, it can still be shown that $
\kappa_A\hat{\kappa}^A, \; \xi_{ABC}\hat{\xi}^{ABC}\geq 0$. The
differential operator $D_{AB}$ is real in the sense that
\[
\widehat{D_{AB}\pi_C}=-D_{AB}\hat\pi_C.
\]
However,  for the Sen covariant derivative one has that
\[
\widehat{\nabla_{AB}\pi_C}=-\nabla_{AB}\hat\pi_C + \tfrac{1}{2}K_{ABC}{}^D\hat\pi_D.
\]

The restriction to $\mathcal{S}$ of an arbitrary spinor expression
with only unprimed indices can be treated as a spinor expression
intrinsic to $\mathcal{S}$. From Section~\ref{Subsection:IntegrCond} to the end of the paper all expressions will be treated as intrinsic to
$\mathcal{S}$. Before that it will be clear from the context if an
expression is valid only on the slice $\mathcal{S}$ or on the entire
spacetime.

\subsection{Commutators}
Let
\[
\square_{AB} \equiv \nabla_{C'(A}\nabla_{B)}{}^{C'}, \quad
\widehat{\square}_{AB} \equiv \tau_A{}^{A'} \tau_B{}^{B'}\square_{A'B'}=\tau_A{}^{A'} \tau_B{}^{B'}\nabla_{C(A'}\nabla_{B')}{}^{C}.
\]
In vacuum, the action of these operators on a spinor $\pi_A$ is given by
\[
\square_{AB}\pi_C =\Psi_{ABCQ}\pi^Q , \quad \widehat{\square}_{AB} \pi_C
=0.
\]

\medskip
In terms of $\square_{AB}$ and $\widehat{\square}_{AB}$,  the
commutators of $\nabla$ and $\nabla_{AB}$ read
\begin{subequations}
\begin{eqnarray}
&& [\nabla,\nabla_{AB}] ={}
\widehat\square_{AB}-\square_{AB}-\tfrac{1}{2}K_{AB}\nabla+K^D{}_{(A}\nabla_{B)D}-K_{ABCD}\nabla^{CD},\label{commutator1}\\
&&[\nabla_{AB},\nabla_{CD}] ={} \frac{1}{2}\left( 
\epsilon_{A(C}\square_{D)B} + \epsilon_{B(C}\square_{D)A} 
\right) +\frac{1}{2}\left( 
\epsilon_{A(C}\widehat\square_{D)B} 
+\epsilon_{B(C}\widehat\square_{D)A} 
\right)\nonumber \\
&& \hspace{2.5cm}+\frac{1}{2}(K_{CDAB}\nabla-K_{ABCD}\nabla) 
+K_{CDQ(A}\nabla_{B)}{}^Q-K_{ABQ(C}\nabla_{D)}{}^Q. \label{commutator2}
\end{eqnarray}
\end{subequations}

\subsection{Space spinor expressions in Cartesian coordinates}
In the sequel it will be sometimes necessary to give spinorial expressions in terms of Cartesian or asymptotically Cartesian frames and coordinates. For this we make use of the spatial Infeld-van der
Waerden symbols $\sigma^{i}{}_{\mathbf{A}\mathbf{B}}$,
$\sigma_{i}{}^{\mathbf{A}\mathbf{B}}$. Given $x^{i}, \; x_{i}\in
\Real^3$ we shall follow the convention that
\[
x^{\mathbf{AB}} \equiv
\sigma_{i}{}^{\mathbf{AB}} x^{i},
\quad x_{\mathbf{AB}} \equiv
\sigma^{i}{}_{\mathbf{AB}} x_{i},
\]
with
\[
x^{\mathbf{AB}}= \frac{1}{\sqrt{2}}
\left(
\begin{array}{cc}
-x^1 + \mbox{i}x^2 & x^3 \\
x^3 & x^1 +\mbox{i} x^2
\end{array}
\right),
\quad 
x_{\mathbf{AB}}= \frac{1}{\sqrt{2}}
\left(
\begin{array}{cc}
-x_1 - \mbox{i}x_2 & x_3 \\
x_3 & x_1 -\mbox{i}x_2
\end{array}
\right).
\]

\section{Twistor initial data}
\label{Section:TwistorData}
In this section we review some aspects of the space spinor
decomposition of the twistor equation \eqref{TwistorEquation}. A first
analysis along these lines was first carried out in \cite{GarVal08a}. 

\subsection{General observations}
Given a spinor $\kappa_A$ (not necessarily a solution to the twistor
equation), it will be convenient to define the following spinors:
\begin{eqnarray*}
&& \xi_A\equiv \tfrac{2}{3}\nabla_A{}^B\kappa_B,\\
&& \xi_{ABC}\equiv \nabla_{(AB}\kappa_{C)},\\
&& H_{A'AB}\equiv \nabla_{A'(A}\kappa_{B)}.
\end{eqnarray*}
We will use this notation throughout the rest of the paper.
Clearly, for a solution to the twistor equation one has
\[
H_{A'AB}=0. 
\]
The spinors $\xi_A$ and $\xi_{ABC}$ arise in the
space spinor decomposition of the spinor $H_{A'AB}$.
Furthermore, the spinors $\xi_A$ and $\xi_{ABC}$ correspond
to the irreducible components of $\nabla_{AB}\kappa_{C}$ so that one
can write
\begin{equation}\label{SenDiffKappaSplit}
\nabla_{AB}\kappa_C=\xi_{ABC}-\xi_{(A}\epsilon_{B)C}.
\end{equation}
The irreducible components of the
derivative $\nabla_{AB}\xi_C$ in vacuum are given by:
\begin{subequations}
\begin{eqnarray}
&& \nabla_{AB}\xi^B={}
\tfrac{1}{2}\nabla^{BC}\xi_{ABC}
+\tfrac{1}{2}K\xi_A
+\tfrac{1}{2}\Omega_{ABCD}\xi^{BCD},
\label{Senxi1a}\\
&&\nabla_{(AB}\xi_{C)}={}
2\nabla_{(A}{}^D\xi_{BC)D}
+\Psi_{ABCD}\kappa^D
+\tfrac{2}{3}K\xi_{ABC}
-\Omega_{ABCD}\xi^D
-\xi_{(A}{}^{DF}\Omega_{BC)DF}.\label{Senxi1b}
\end{eqnarray}
\end{subequations}


\subsection{Propagation of the twistor equation}

A straightforward consequence of the twistor equation 
\eqref{TwistorEquation} in a vacuum spacetime is that:
\begin{equation}\label{boxkappa}
\square\kappa_A=0,
\end{equation}
where $\square \equiv \nabla^{AA'}\nabla_{AA'}$.  The latter equation
is obtained by applying the differential operator $\nabla^{AA'}$ to
equation \eqref{TwistorEquation} and then using the vacuum commutator
relation for the spacetime Levi-Civita connection. The wave equation
\eqref{boxkappa} plays a role in the discussion of the
\emph{propagation} of the Killing spinor equation ---cfr. \cite{GarVal08a}.

\begin{lemma}
Let $\kappa_A$ be a solution to equation \eqref{boxkappa}. Then the
spinor field $H_{A'AB}$ will satisfy
the wave equation
\begin{equation}
\square H_{A'AB}={}2H_{A'}{}^{CD}\Psi_{ABCD}. \label{wave1} 
\end{equation}
\end{lemma}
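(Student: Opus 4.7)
The plan is to compute $\square H_{A'AB}$ directly from the definition $H_{A'AB}=\nabla_{A'(A}\kappa_{B)}$ by commuting derivatives until they can be combined into a $\square$ acting on $\kappa$, which vanishes by hypothesis. Since the calculation is linear, it is enough to compute $\square\nabla_{AA'}\kappa_B$ and symmetrise in $(AB)$ at the end. The two essential inputs are the vacuum spinor commutator
\[
[\nabla_{AA'},\nabla_{CC'}]\kappa_B = \epsilon_{A'C'}\Psi_{ACBQ}\kappa^Q,
\]
which follows from the general identity $[\nabla_{AA'},\nabla_{CC'}] = \epsilon_{A'C'}\square_{AC} + \epsilon_{AC}\bar\square_{A'C'}$ together with $\bar\square_{A'C'}\kappa_B=0$ in vacuum, and the vacuum Bianchi identity $\nabla^{A}{}_{A'}\Psi_{ABCD}=0$.

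The calculation proceeds in two successive commutations. First, commute the inner $\nabla_{CC'}$ past $\nabla_{AA'}$ in $\square\nabla_{AA'}\kappa_B = \nabla^{CC'}\nabla_{CC'}\nabla_{AA'}\kappa_B$; this produces the rearranged term $\nabla^{CC'}\nabla_{AA'}\nabla_{CC'}\kappa_B$ plus a curvature remainder proportional to $\nabla^{CC'}(\epsilon_{A'C'}\Psi_{ACBQ}\kappa^Q)$. Second, in the rearranged piece, commute the outer $\nabla^{CC'}$ past $\nabla_{AA'}$: the principal part $\nabla_{AA'}\square\kappa_B$ vanishes by hypothesis, leaving $[\nabla^{CC'},\nabla_{AA'}]\nabla_{CC'}\kappa_B$. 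This commutator receives three potential curvature contributions, one for each spinor index of $\nabla_{CC'}\kappa_B$; the $C$-index and $C'$-index contributions vanish because total symmetry of $\Psi_{ABCD}$ and $\bar\Psi_{A'B'C'D'}$ forces the trace identities $\Psi^{C}{}_{ACD}=0$ and $\bar\Psi^{C'}{}_{A'C'D'}=0$, and the surviving $B$-index contribution collapses to $\Psi_{ABCD}H_{A'}{}^{CD}$ after using the total symmetry of $\Psi$ to extract the symmetric part $\nabla_{A'}{}^{(C}\kappa^{D)}=H_{A'}{}^{CD}$.

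The curvature remainder $\nabla^{CC'}(\epsilon_{A'C'}\Psi_{ACBQ}\kappa^Q)$ is handled analogously: since $\epsilon_{A'C'}$ is covariantly constant, the contraction $\epsilon_{A'C'}\nabla^{CC'}$ reduces (up to a controlled sign) to $\nabla^{C}{}_{A'}$, and the Leibniz expansion splits the result into a derivative on $\Psi$, killed by the vacuum Bianchi identity, and a derivative on $\kappa$, which again collapses to $\Psi_{ABCD}H_{A'}{}^{CD}$ by the same symmetry argument. Adding the two surviving contributions yields $2\Psi_{ABCD}H_{A'}{}^{CD}$, and because this expression is already symmetric in $(AB)$ by total symmetry of $\Psi$, the symmetrisation required to recover $\square H_{A'AB}$ is automatic.

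The only genuine obstacle is bookkeeping rather than mathematical depth: one must consistently track the signs produced by raising and lowering primed indices with $\epsilon_{A'B'}$, and distribute the commutator correctly across the three spinor indices of $\nabla_{CC'}\kappa_B$. The clean coefficient $2$ emerges because the two commutator steps each contribute exactly one copy of $\Psi_{ABCD}H_{A'}{}^{CD}$, and this relies crucially on the conspiracy between the trace identities $\Psi^{C}{}_{ACD}=0$, $\bar\Psi^{C'}{}_{A'C'D'}=0$ and the vacuum Bianchi identity; a sign error at any stage would spoil the final coefficient.
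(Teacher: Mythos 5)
Your argument is correct and is precisely the computation the paper leaves implicit: the lemma is stated without proof, and the standard route is exactly the double commutation you perform, using the vacuum spinor Ricci identity (with the primed box annihilating unprimed indices in vacuum), the trace-freeness of $\Psi_{ABCD}$ and $\bar\Psi_{A'B'C'D'}$ to kill the contributions on the derivative indices, and the vacuum Bianchi identity to kill the derivative of $\Psi$ in the remainder term; the hypothesis $\square\kappa_A=0$ enters only through $\nabla_{AA'}\square\kappa_B=0$, as it should. The only part you assert rather than verify is that the two surviving copies of $\Psi_{ABCD}H_{A'}{}^{CD}$ enter with the same sign so as to produce the coefficient $2$ rather than cancel; you flag this honestly, and it does work out in the Penrose--Rindler conventions the paper uses, so I regard the proof as complete up to that routine sign check.
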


The crucial observation is that the right hand side of equation
\eqref{wave1} is a homogeneous expression of the unknown. The hyperbolicity of equation \eqref{wave1} implies the
following result:

\begin{proposition}
\label{Proposition:TwistorDevelopment}
The development $(\mathcal{M},g_{\mu\nu})$ of an
initial data set for the vacuum Einstein field equations,
$(\mathcal{S},h_{ab},K_{ab})$, has a solution to the twistor equation in the 
domain of dependence of $\mathcal{U}\subset\mathcal{S}$ if 
and only if the following equations are satisfied on $\mathcal{U}$.
\begin{subequations} 
\begin{eqnarray}
&& H_{A'AB}=0, \label{old_twd1}\\
&& \nabla H_{A'AB}=0. \label{old_twd2}
\end{eqnarray}
\end{subequations}
\end{proposition}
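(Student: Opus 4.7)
The plan is to argue the two implications separately, with the hard work on the sufficiency side.

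For necessity, suppose a spinor $\kappa_A$ satisfying $\nabla_{A'(A}\kappa_{B)}=0$ exists on the domain of dependence of $\mathcal{U}$. Then by definition $H_{A'AB}=0$ on that set, so in particular it vanishes on $\mathcal{U}$, and any spacetime derivative of $H_{A'AB}$ also vanishes there; in particular $\nabla H_{A'AB}=0$ on $\mathcal{U}$. This gives \eqref{old_twd1}--\eqref{old_twd2}.

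For sufficiency, the idea is to prescribe suitable Cauchy data for $\kappa_A$ on $\mathcal{U}$ and propagate off the hypersurface using the wave equation \eqref{boxkappa}. Concretely, I would take as Cauchy data the spinor $\kappa_A|_\mathcal{U}$ together with its normal derivative $\nabla\kappa_A|_\mathcal{U}$ that are implicitly specified by the hypotheses \eqref{old_twd1}--\eqref{old_twd2} (the decomposition \eqref{SenDiffKappaSplit} and the normal part of $\nabla_{A'(A}\kappa_{B)}$ rewrite these hypotheses as algebraic relations between $\kappa_A$, $\nabla\kappa_A$ and the intrinsic Sen derivatives on $\mathcal{U}$). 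One then extends $\kappa_A$ to $D(\mathcal{U})$ as the unique solution of the linear hyperbolic Cauchy problem
\[
\square\kappa_A=0,\qquad \kappa_A|_\mathcal{U},\;\nabla\kappa_A|_\mathcal{U}\;\text{as prescribed}.
\]

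With $\kappa_A$ so extended, define $H_{A'AB}\equiv\nabla_{A'(A}\kappa_{B)}$ on $D(\mathcal{U})$. Since $\square\kappa_A=0$ holds throughout the development, the preceding lemma applies and $H_{A'AB}$ satisfies the homogeneous linear wave equation \eqref{wave1}. By hypothesis, its Cauchy data $H_{A'AB}|_\mathcal{U}$ and $\nabla H_{A'AB}|_\mathcal{U}$ vanish identically. Uniqueness for the Cauchy problem associated to \eqref{wave1} (a linear, homogeneous, hyperbolic system with smooth coefficients depending on $\Psi_{ABCD}$) then forces $H_{A'AB}\equiv 0$ on $D(\mathcal{U})$, which is precisely the twistor equation.

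The main potential obstacle is the sufficiency step, specifically ensuring that the conditions \eqref{old_twd1}--\eqref{old_twd2} really provide admissible Cauchy data for the wave equation $\square\kappa_A=0$ (i.e., determine $\kappa_A|_\mathcal{U}$ and $\nabla\kappa_A|_\mathcal{U}$ consistently) and that the wave equation \eqref{wave1} for $H_{A'AB}$ is genuinely homogeneous in $H_{A'AB}$ with no hidden inhomogeneity coming from the difference between the spacetime derivative $\nabla_{A'AB}$ and the normal/Sen splitting used on $\mathcal{U}$. Once these bookkeeping issues are handled via the commutators \eqref{commutator1}--\eqref{commutator2} and the explicit form of \eqref{wave1}, the result reduces to the standard uniqueness theorem for symmetric hyperbolic systems.
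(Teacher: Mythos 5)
Your proposal is correct and follows essentially the same route the paper intends: the paper's (implicit) proof is precisely that $H_{A'AB}$ satisfies the homogeneous wave equation \eqref{wave1} once $\kappa_A$ is propagated by $\square\kappa_A=0$, so vanishing Cauchy data \eqref{old_twd1}--\eqref{old_twd2} forces $H_{A'AB}\equiv 0$ on the domain of dependence by hyperbolic uniqueness, with necessity being immediate. Your added care about how \eqref{old_twd1}--\eqref{old_twd2} encode the Cauchy data $(\kappa_A,\nabla\kappa_A)|_{\mathcal{U}}$ is exactly the bookkeeping the paper defers to the proof of Theorem~\ref{Theorem:TwistorData}.
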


\subsection{The twistor initial data equations}
\label{Subsection:TwistorInitialDataEqns}

The \emph{twistor initial data} conditions of Proposition~\ref{Proposition:TwistorDevelopment} can be reexpressed in terms of
conditions on the spinor $\kappa_A$ which are intrinsic to the
hypersurface $\mathcal{S}$. Extensive computations using the {\tt
xAct} suite for {\tt Mathematica} render the following result:

\begin{theorem}
\label{Theorem:TwistorData}
Let $(\mathcal{S},h_{ab},K_{ab})$ be an initial data set for the vacuum Einstein field equations, where $\mathcal{S}$ is a Cauchy hypersurface. Let $\mathcal{U}\subset\mathcal{S}$ be an open set. 
The development of the initial data set will then have a solution to
the twistor equation in the domain of dependence of $\mathcal{U}$ if and only if 
\begin{subequations}
\begin{eqnarray}
&& \xi_{ABC}{}=0,\label{twd1}\\
&& \Psi_{ABCD}\kappa^D={}0,\label{twd2}
\end{eqnarray}
\end{subequations}
are satisfied on $\mathcal{U}$. The valence 1 Killing spinor  is obtained by 
evolving \eqref{boxkappa} with initial data satisfying conditions \eqref{twd1}-\eqref{twd2} and
\begin{equation}\label{twd3}
\nabla\kappa_A=-\xi_A,
\end{equation}
on $\mathcal{U}$.
\end{theorem}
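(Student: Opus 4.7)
The plan is to invoke Proposition~\ref{Proposition:TwistorDevelopment} and translate the two spacetime equations $H_{A'AB}=0$ and $\nabla H_{A'AB}=0$ on $\mathcal{U}$ into intrinsic conditions on the initial data. The content of \eqref{old_twd1} will be captured by the pair \eqref{twd1}, \eqref{twd3} via a direct space-spinor decomposition, while the content of \eqref{old_twd2} will be absorbed into the wave-equation evolution and reduce to the algebraic requirement \eqref{twd2}.

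The first step is to decompose $H_{A'AB}$ into space spinors by contracting with $\tau_C{}^{A'}$ and splitting the resulting $(C,A)$ indices into their symmetric and antisymmetric parts; inserting \eqref{SenDiffKappaSplit} then produces a relation of the schematic form
\[
\tau_C{}^{A'} H_{A'AB} \;=\; \xi_{CAB} \;-\; \tfrac{1}{2}\epsilon_{C(A}\nabla\kappa_{B)} \;-\; \tfrac{1}{2}\epsilon_{C(A}\xi_{B)}.
\]
Because the totally symmetric and $\epsilon$-trace parts lie in inequivalent irreducible representations, $H_{A'AB}=0$ on $\mathcal{U}$ is equivalent to the pair $\xi_{ABC}=0$ and $\nabla\kappa_A=-\xi_A$, namely \eqref{twd1} and \eqref{twd3}. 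The necessity of \eqref{twd2} then follows at once from the spacetime integrability condition \eqref{IntegrabilityCondition} restricted to $\mathcal{U}$.

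For the converse, assume \eqref{twd1}--\eqref{twd3} on $\mathcal{U}$ and take $\kappa_A$ on $D(\mathcal{U})$ to be the unique solution of $\square\kappa_A=0$ with Cauchy data $\kappa_A|_{\mathcal{U}}$ and $\nabla\kappa_A|_{\mathcal{U}}=-\xi_A$. The homogeneous wave equation \eqref{wave1} for $H_{A'AB}$ together with hyperbolic uniqueness reduces the twistor property on $D(\mathcal{U})$ to verifying $H_{A'AB}=0$ and $\nabla H_{A'AB}=0$ on $\mathcal{U}$. The first follows from the previous step. The second is the core of the computation: one expresses $\nabla H_{A'AB}$ in the space-spinor split, commutes $\nabla$ past $\nabla_{A'A}$ using \eqref{commutator1}, trades the resulting $\nabla^2\kappa_A$ for spatial derivatives via \eqref{boxkappa}, and simplifies with the Hamiltonian and momentum constraints together with \eqref{Senxi1a}--\eqref{Senxi1b}. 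After imposing $\xi_{ABC}=0$ and $\nabla\kappa_A+\xi_A=0$ the remainder should collapse to a multiple of $\Psi_{ABCD}\kappa^D$, which vanishes by \eqref{twd2}. This last reduction is the main obstacle: it is a long but in principle mechanical calculation, and is precisely where the authors resort to the symbolic-algebra package \texttt{xAct}.
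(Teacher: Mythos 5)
Your proposal follows essentially the same route as the paper's proof: a space-spinor decomposition of $H_{A'AB}=0$ yielding \eqref{twd1} and \eqref{twd3}, followed by a reduction of $\nabla H_{A'AB}=0$ (modulo the identities \eqref{Senxi1a}--\eqref{Senxi1b} and the evolution by $\square\kappa_A=0$) to the algebraic condition \eqref{twd2}, with Proposition~\ref{Proposition:TwistorDevelopment} and the hyperbolicity of \eqref{wave1} closing the argument on the domain of dependence. The only cosmetic difference is that you obtain the necessity of \eqref{twd2} from the spacetime integrability condition \eqref{IntegrabilityCondition} rather than from the decomposition of $\nabla H_{A'AB}=0$ itself, which is a harmless shortcut.
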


\bigskip
\noindent
\textbf{Remark 1.} Conditions \eqref{twd1}-\eqref{twd2} are intrinsic
to $\mathcal{U}\subset \mathcal{S}$ and will be referred to as the
\emph{twistor initial data equations}. In particular, equation
\eqref{twd1}, which can be written as
\begin{equation}
\label{SpatialTwistorEquation}
\nabla_{(AB}\kappa_{C)}=0,
\end{equation}
will be called the \emph{spatial twistor equation}, whereas
\eqref{twd2} will be known as the \emph{algebraic condition}. The
self-dual Weyl spinor $\Psi_{ABCD}$ can be written in terms of
quantities intrinsic to the initial hypersurface $\mathcal{S}$ using
\[
\Psi_{ABCD} = E_{ABCD} + \mbox{i} B_{ABCD},
\]
with
\begin{eqnarray*}
&&  E_{ABCD}= -r_{(ABCD)} + \tfrac{1}{2}\Omega_{(AB}{}^{PQ}\Omega_{CD)PQ}
- \tfrac{1}{6}\Omega_{ABCD}K, \\
&&  B_{ABCD}=-\mbox{i}\ D^Q{}_{(A}\Omega_{BCD)Q},
\end{eqnarray*}
and where the spinor $r_{ABCD}$ is the spinorial counterpart of the
Ricci tensor, $r_{ab}$, of the 3-metric $h_{ab}$.

\begin{proof}
The proof of Theorem \ref{Theorem:TwistorData} consists of a space
spinor decomposition of the conditions
\eqref{old_twd1}-\eqref{old_twd2} and of an analysis of the
dependencies of the resulting conditions.  All calculations are made
on $\mathcal{U}\subset\mathcal{S}$. The equation $H_{A'AB}=0$ is equivalent to 
\begin{eqnarray*}
&& \xi_{ABC}={}0,\\
&& \nabla\kappa_A={}-\xi_A.
\end{eqnarray*}
The wave equation $\square \kappa_A=0$ renders
\[
\nabla\nabla\kappa_A=
-K\nabla\kappa_A
-2\nabla^{BC}\xi_{ABC}
-2\nabla_{AB}\xi^B
+K_{AB}\xi^B
+K^{BC}\xi_{ABC}.
\]
Using the above equations, the equation $\nabla H_{A'AB}=0$ on
$\mathcal{S}$ is seen to be equivalent to 
\begin{eqnarray*}
&&\nabla_{AB}\xi^B={}
\tfrac{1}{2}K\xi_A,\\
&&\nabla_{(AB}\xi_{C)}={}
-\Psi_{ABCD}\kappa^D
-\Omega_{ABCD}\xi^D .
\end{eqnarray*}
Also using the equations \eqref{Senxi1a}-\eqref{Senxi1b}, one will see that 
\begin{eqnarray*}
&& \xi_{ABC}={}0,\\
&& \Psi_{ABCD}\kappa^D={}0,\\
&& \nabla\kappa_A={}-\xi_A,
\end{eqnarray*}
on $\mathcal{U}$ is enough to guarantee $H_{A'AB}=0$ everywhere in the domain of dependence of $\mathcal{U}$, if we evolve $\kappa_A$ by $\square \kappa_A=0$. This completes the proof.
\end{proof}

\subsection{The integrability conditions of the spatial twistor equation}\label{Subsection:IntegrCond}
The condition $\xi_{ABC}\equiv\nabla_{(AB}\kappa_{C)}=0$ 
does not immediately give information about the other irreducible
component of $\nabla_{AB}\kappa_{C}$, namely $\xi_A$. 
However, using $\xi_{ABC}=0$ 
in the relations \eqref{Senxi1a}-\eqref{Senxi1b} one finds that
$\nabla_{AB}\xi_{C}$ can be written in terms of $\xi_A$, $\kappa_A$ and
curvature spinors. We get:
\begin{eqnarray*}
&& \nabla_{AB}\xi^B={}
\tfrac{1}{2}K\xi_A , \\
&&\nabla_{(AB}\xi_{C)}={}
\Psi_{ABCD}\kappa^D
-\Omega_{ABCD}\xi^D .
\end{eqnarray*}

From these we can make the following observation
\begin{lemma}
\label{Lemma:SecondDerivative}
Assume that $\nabla_{(AB}\kappa_{C)}=0$, then
\[
\nabla_{AB}\nabla_{CD}\kappa_E = H_{ABCDE},
\]
where $H_{ABCDE}$ is a linear combination of $\kappa_A$ and
$\nabla_{AB}\kappa_{C}$ with coefficients depending on $\Psi_{ABCD}$, $\hat{\Psi}_{ABCD}$ and $K_{ABCD}$.
\end{lemma}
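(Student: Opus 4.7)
The plan is to exploit the fact that the hypothesis $\xi_{ABC}=\nabla_{(AB}\kappa_{C)}=0$ holds on an open set — so it can be differentiated — together with the irreducible-decomposition identities \eqref{Senxi1a}--\eqref{Senxi1b}. First, substituting $\xi_{ABC}=0$ into \eqref{SenDiffKappaSplit} collapses the Sen derivative of $\kappa_A$ to the single trace piece:
\[
\nabla_{CD}\kappa_E = -\xi_{(C}\epsilon_{D)E}, \qquad \xi_C = \tfrac{2}{3}\nabla_C{}^B\kappa_B.
\]
Thus the computation of $\nabla_{AB}\nabla_{CD}\kappa_E$ reduces to the computation of $\nabla_{AB}\xi_C$.

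Next I would apply $\nabla_{AB}$ to the identity above; since the Sen connection annihilates $\epsilon_{DE}$ this yields an expression for $\nabla_{AB}\nabla_{CD}\kappa_E$ that depends linearly on $\nabla_{AB}\xi_C$ alone (with fixed $\epsilon$-factors and symmetrisations in $CDE$). I would then decompose $\nabla_{AB}\xi_C$ into its totally symmetric piece $\nabla_{(AB}\xi_{C)}$ and its trace $\nabla_{AB}\xi^B$; these two irreducible components determine $\nabla_{AB}\xi_C$ uniquely via a fixed algebraic identity involving $\epsilon$-symbols.

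Finally, both irreducible pieces are evaluated by feeding $\xi_{ABC}=0$ (and, by differentiation, also $\nabla_{AB}\xi_{CDE}=0$) into \eqref{Senxi1a}--\eqref{Senxi1b}, causing every term containing $\xi_{ABC}$ to drop out and leaving
\[
\nabla_{AB}\xi^B = \tfrac{1}{2}K\xi_A, \qquad \nabla_{(AB}\xi_{C)} = \Psi_{ABCD}\kappa^D - \Omega_{ABCD}\xi^D.
\]
Reassembling then exhibits $\nabla_{AB}\nabla_{CD}\kappa_E$ as a linear combination of $\kappa_A$ and $\xi_A$ with coefficients built from $\Psi_{ABCD}$ and the irreducible parts $\Omega_{ABCD},K$ of $K_{ABCD}$. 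Since $\xi_A$ is itself a trace of $\nabla_{AB}\kappa_C$, this is precisely the structural form claimed for $H_{ABCDE}$; the possibility of a $\hat{\Psi}_{ABCD}$-dependence permitted by the statement simply accommodates greater generality and does not actually materialise in this derivation. The only real obstacle is the bookkeeping of the irreducible decomposition and its symmetrisation factors — no commutator computation is needed, because $\widehat{\square}_{AB}$ annihilates unprimed spinors in vacuum and the sole non-trivial contribution of $\square_{AB}$ is already encoded in the $\Psi_{ABCD}\kappa^D$ term on the right-hand side of \eqref{Senxi1b}.
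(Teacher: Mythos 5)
Your argument is correct and follows essentially the same route as the paper: with $\xi_{ABC}=0$ the identity \eqref{SenDiffKappaSplit} reduces $\nabla_{AB}\nabla_{CD}\kappa_E$ to $\nabla_{AB}\xi_C$, whose two irreducible pieces are read off from \eqref{Senxi1a}--\eqref{Senxi1b} as $\nabla_{AB}\xi^B=\tfrac{1}{2}K\xi_A$ and $\nabla_{(AB}\xi_{C)}=\Psi_{ABCD}\kappa^D-\Omega_{ABCD}\xi^D$, exactly as in the text preceding the lemma. Your side remarks (that the derivative terms of $\xi_{ABC}$ drop because the hypothesis holds on an open set, and that the $\hat{\Psi}_{ABCD}$ dependence allowed by the statement does not actually arise) are also consistent with the paper.
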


\medskip
\noindent
\textbf{Remark.} It is important to point out that the assertion of
the Lemma \ref{Lemma:SecondDerivative} is false if $\nabla_{(AB}\kappa_{C)}\neq 0$.

\section{The approximate twistor equation}
\label{Section:ApproximateTwistor}

The spatial twistor equation \eqref{twd1} is an overdetermined
condition for the spinor $\kappa_A$, so we can not expect that a
generic initial data set $(\mathcal{S},h_{ab},K_{ab})$ admits a
solution. One would therefore like to weaken the equation so that it
always admits a unique solution if one specifies the asymptotic
behaviour in a specific way. The strategy will be to compose the spatial twistor operator with its formal adjoint. In this section we will follow this idea and construct the approximate twistor equation. The existence and uniqueness of solutions will be proved in Section~\ref{Section:AsymptoticallyEuclideanData}.

\subsection{The approximate twistor operator}
Let $\mathfrak{S}_1$ and $\mathfrak{S}_3$ denote, respectively, the
spaces of totally symmetric valence 1 and valence 3
spinors. Given $\zeta_{ABC}, \; \chi_{ABC}\in \mathfrak{S}_3$, we
introduce an inner product in $\mathfrak{S}_3$ via:
\[
\langle \zeta_{ABC}, \chi_{DEF}\rangle \equiv \int_{\mathcal{S}} \zeta_{ABC} \hat{\chi}^{ABC} \mbox{d}\mu,
\]
where $\mbox{d}\mu$ denotes the volume form of the 3-metric
$h_{ab}$. We introduce the spatial twistor operator ${\bm \Phi}$
via
\[
{\bm\Phi}: \mathfrak{S}_1 \rightarrow \mathfrak{S}_3, \quad {\bm\Phi}(\kappa)_{ABC}= \nabla_{(AB}\kappa_{C)}. 
\]
Now, consider the pairing
\begin{align*}
\langle \nabla_{(AB}\kappa_{C)}, \zeta_{DEF} \rangle ={}&
\int_{\mathcal{S}} \nabla_{(AB}\kappa_{C)} \hat{\zeta}^{ABC} \mbox{d}\mu 
= \int_{\mathcal{S}} \nabla_{AB}\kappa_{C} \hat{\zeta}^{ABC} \mbox{d}\mu.
\end{align*}
The formal adjoint of the spatial Killing operator, ${\bm\Phi}^*$, can be
obtained from the latter expression by integration by parts. To this
end we note the identity
\begin{align}
\int_\mathcal{U}\nabla_{AB}\kappa_C\hat\zeta^{ABC} \mbox{d}\mu={}&
\int_{\partial\mathcal{U}}n_{AB}\kappa_C\hat\zeta^{ABC} \mbox{d}S
+\int_\mathcal{U}\kappa^C(
\widehat{\Omega_C{}^{ABD}\zeta_{ABD}}
-\widehat{\nabla^{AB}\zeta_{ABC}}
) \mbox{d}\mu, \label{IntegrationbyParts}
\end{align}
with $\mathcal{U}\subset \mathcal{S}$, and where $\mbox{d}S$ denotes
the area element of $\partial \mathcal{U}$, $n_{AB}$ is the spinorial
counterpart of its outward pointing normal, and $\zeta_{ABC}$ is a
symmetric spinor. From \eqref{IntegrationbyParts} it follows that 
\begin{equation}
\label{FormalAdjoint}
{\bm\Phi}^*:\mathfrak{S}_3 \rightarrow \mathfrak{S}_1, \quad {\bm\Phi}^*(\zeta)_A=\nabla^{BC}\zeta_{ABC}-\Omega_A{}^{BCD}\zeta_{BCD}.
\end{equation}
We shall call the composition operator $\mathbf{L}\equiv {\bm\Phi}^*\circ {\bm\Phi}:
\mathfrak{S}_1\rightarrow \mathfrak{S}_1$ given by
\begin{equation}
\mathbf{L}(\kappa_{A}) \equiv \nabla^{BC}\nabla_{(AB}\kappa_{C)}-\Omega_A{}^{BCD}\nabla_{BC}\kappa_D=0, \label{ApproximateTwistorEquation} 
\end{equation}
the \emph{approximate twistor operator}, and equation
\eqref{ApproximateTwistorEquation} the \emph{approximate twistor
  equation}. Note that every solution to the spatial twistor
equation \eqref{SpatialTwistorEquation} is also a solution to
equation \eqref{ApproximateTwistorEquation}. Furthermore, as it will be discussed in the proof of Proposition~\ref{EllipticKernel}, if a solution to the
approximate twistor equation has a sufficiently fast decay at infinity, then it follows from \eqref{IntegrationbyParts} that it is also a solution to the spatial twistor equation. 

\subsection{Ellipticity of the approximate twistor operator}
As a prior step to the analysis of the solutions to the approximate
twistor equation \eqref{ApproximateTwistorEquation}, we look first at
its ellipticity properties.

\begin{lemma}
The operator $\mathbf{L}$ defined by equation
\eqref{ApproximateTwistorEquation} is a formally self-adjoint elliptic
operator.
\end{lemma}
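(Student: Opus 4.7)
The plan is to split the lemma into two independent assertions---formal self-adjointness and ellipticity---and treat both as formal consequences of the construction $\mathbf{L} = {\bm\Phi}^*\circ{\bm\Phi}$.

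For self-adjointness the argument is essentially tautological. For any $\kappa,\lambda\in\mathfrak{S}_1$ of compact support, two applications of the adjointness identity \eqref{IntegrationbyParts} (whose boundary term then vanishes) give
\[
\langle \mathbf{L}(\kappa),\lambda\rangle = \langle {\bm\Phi}(\kappa),{\bm\Phi}(\lambda)\rangle = \langle \kappa,\mathbf{L}(\lambda)\rangle,
\]
so $\mathbf{L}^*=\mathbf{L}$. This is just the general principle that any ``Laplacian'' built by composing a first-order operator with its formal adjoint is formally self-adjoint, so nothing further is required here.

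The real work is ellipticity. I would compute the principal symbol of $\mathbf{L}$ as the composition of the principal symbols of ${\bm\Phi}$ and ${\bm\Phi}^*$, discarding the zero-order $\Omega$-term in \eqref{FormalAdjoint}. For a real cotangent spinor $\xi_{AB}$ this gives
\[
\sigma_{\mathbf{L}}(\xi)\kappa_A = \xi^{BC}\,\xi_{(AB}\kappa_{C)}.
\]
The key step is then to reduce this to a scalar multiple of $\kappa_A$. Expanding the symmetrization yields three terms; one is already proportional to $\kappa_A$, and the remaining two collapse using the $2$-spinor identity
\[
\xi^{AB}\xi_{BC} = \tfrac{1}{2}(\xi^{DE}\xi_{DE})\,\delta^A{}_C,
\]
valid for any symmetric $\xi_{AB}$ and readily checked in a basis (equivalently, Cayley--Hamilton applied to the traceless $2\times 2$ matrix $\xi_A{}^B$). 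Combining the three contributions produces $\sigma_{\mathbf{L}}(\xi)\kappa_A = \tfrac{2}{3}(\xi^{BC}\xi_{BC})\kappa_A$.

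Finally I would rule out $\xi^{BC}\xi_{BC}=0$ for a nonzero real covector. A nonzero real cotangent vector corresponds to a nonzero anti-Hermitian spinor, i.e.\ $\hat\xi_{AB}=-\xi_{AB}$, so $\xi_{AB}\hat\xi^{AB} = -\xi^{AB}\xi_{AB}$, and the positivity $\xi_{AB}\hat\xi^{AB}>0$ recalled in Section~\ref{Section:SpaceSpinors} forces $\xi^{AB}\xi_{AB}\neq 0$. Consequently $\sigma_{\mathbf{L}}(\xi)$ is a nonzero scalar multiple of the identity on $\mathfrak{S}_1$ for every nonzero real covector, and $\mathbf{L}$ is elliptic. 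I do not foresee any substantive obstacle beyond the short algebraic identity above.
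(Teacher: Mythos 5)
Your proof is correct and follows essentially the same route as the paper: formal self-adjointness is immediate from the construction $\mathbf{L}={\bm\Phi}^*\circ{\bm\Phi}$, and ellipticity comes from showing the principal symbol is $\tfrac{2}{3}(\xi^{BC}\xi_{BC})$ times the identity on $\mathfrak{S}_1$. The only cosmetic difference is that the paper obtains this by commuting Sen derivatives at the operator level (exhibiting the principal part $\tfrac{2}{3}\nabla^{BC}\nabla_{BC}\kappa_A$), whereas you perform the equivalent Cayley--Hamilton contraction directly on the symbol; both rest on the same spinor identity, and your explicit check that $\xi^{BC}\xi_{BC}\neq 0$ for a nonzero real covector merely fills in what the paper leaves as ``manifestly elliptic.''
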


\begin{proof} 
The operator is by construction formally self-adjoint as it is given
by the composition of an operator and its formal adjoint. In order to
verify ellipticity, we will use the fact that commuting derivatives
does not change the principal symbol of an operator. Therefore, we
use the vacuum commutators
to get 
\[
\nabla^{AB}\nabla_{BC}\kappa_A=\tfrac{1}{2}\nabla^{AB}\nabla_{AB}\kappa_C-\tfrac{1}{2}\Omega_{CABD}\nabla^{BD}\kappa^A+\tfrac{1}{3}K\nabla_{CA}\kappa^A .
\]
From this we see that 
\[
\mathbf{L}(\kappa_{A}) \equiv \tfrac{2}{3}\nabla^{BC}\nabla_{BC}\kappa_{A}+\tfrac{2}{3}\Omega_{ABCD}\nabla^{CD}\kappa^B+\tfrac{2}{9}K\nabla_{AB}\kappa^B ,
\]
which is manifestly elliptic.
\end{proof}

\medskip
We note that the approximate twistor
equation \eqref{ApproximateTwistorEquation}  arises naturally from a
variational principle. More precisely, it is the Euler-Lagrange equation of the functional
\begin{equation}
J = \int_{\mathcal{S}} \nabla_{(AB} \kappa_{C)} \widehat{\nabla^{AB} \kappa^{C}} \mbox{d}\mu. \label{functional}
\end{equation}

\section{The approximate twistor equation in asymptotically
  Euclidean manifolds}
\label{Section:AsymptoticallyEuclideanData}

After having studied some formal properties of the twistor
initial data equations \eqref{twd1}-\eqref{twd2},\eqref{twd3}, and the approximate twistor equation 
\eqref{ApproximateTwistorEquation}, we proceed to analyse their
solvability on asymptotically Euclidean manifolds.

\subsection{Asymptotic flatness assumptions}
In what follows, we will be concerned with vacuum spacetimes arising
as the development of asymptotically Euclidean data sets. Let
$(\mathcal{S},h_{ab},K_{ab})$, denote a smooth initial data set for
the vacuum Einstein field equations. For an \emph{asymptotic end} of
$\mathcal{S}$ it will be understood an open set diffeomorphic to the
complement of a closed ball in $\Real^3$.  The 3-manifold $\mathcal{S}$
will be assumed to have an arbitrary number ($N \geq 1$) of
ends. Besides paracompactness and orientability, no further
topological restrictions will be made. Hence, the 3-manifold could
have an arbitrary number of handles.

\medskip
On each asymptotic end it will be assumed that it is possible to introduce asymptotically
Cartesian coordinates $x^i$ with $r=((x^1)^2 +
(x^2)^2 + (x^3)^2)^{1/2}$, such that the intrinsic metric
and extrinsic curvature of $\mathcal{S}$ satisfy in the asymptotic end 
\begin{subequations}
\begin{eqnarray}
&& h_{ij} = -\left(1+ 2m r^{-1}\right)\delta_{ij} + o_\infty(r^{-3/2}), \label{OldDecay1} \\
&& K_{ij} = o_\infty(r^{-5/2}). \label{OldDecay2}
\end{eqnarray}
\end{subequations}
This class of data
can be described as \emph{asymptotically Schwarzschildean}. Here, and
in what follows, the fall off conditions of the various fields will be
expressed in terms of weighted Sobolev spaces $H^s_\beta$, where $s$
is a non-negative integer and $\beta$ is a real number. Here we use
the conventions for these spaces given in \cite{Bar86} ---see also
\cite{BaeVal10b}.  We say that $\eta\in H^\infty_\beta$ if $\eta\in
H^s_\beta$ for all $s$. Thus, the functions in $H^\infty_\beta$ are
smooth over $\mathcal{S}$ and have a fall off at infinity such that
$\partial^l \eta = o(r^{\beta-|l|})$. We will often write
$\eta=o_\infty(r^\beta)$ for $\eta\in H^\infty_\beta$ at the
asymptotic end.

\subsection{Asymptotic form of solutions to the spatial twistor equation}
\label{Section:DecayKappa}
In the sequel, given an initial data set $(\mathcal{S},h_{ab},K_{ab})$
satisfying the decay conditions
\eqref{OldDecay1}-\eqref{OldDecay2}, it will be
necessary to show that it is always possible to solve the equation
\begin{equation}
\label{AsymptoticSpatialTwistorEquation}
\nabla_{(AB}\kappa_{C)}=o_\infty(r^{-3/2}),
\end{equation}
order by order without making any further assumptions on the data. For
this we use asymptotically Cartesian coordinates and consider a
normalised dyad $\{o_A,\iota_A\}$ such that $o_A \iota^A=1$. A direct
calculation allows us to verify the following:

\begin{proposition}
\label{Proposition:AsymptoicSpatialTwistor}
Let $(\mathcal{S},h_{ab},K_{ab})$ denote an initial data set for the
vacuum Einstein field equations satisfying at each asymptotic end the
decay conditions \eqref{OldDecay1}-\eqref{OldDecay2}. Let $m$ denote
the ADM mass of one of these ends. Then on this end
\begin{equation}
\kappa_{\mathbf{A}} = \left(1+\tfrac{1}{2}m r^{-1} \right)x_{\mathbf{AB}} o^{\mathbf{B}} + o_\infty(r^{-1/2}), \label{TwistorLeading}
\end{equation}
satisfies equation \eqref{AsymptoticSpatialTwistorEquation}.
\end{proposition}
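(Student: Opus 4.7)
The strategy is to substitute the ansatz into $\nabla_{(AB}\kappa_{C)}$, reduce the problem to a computation with the flat Euclidean derivative by absorbing all subleading geometric corrections into the $o_\infty(r^{-3/2})$ error, and then to verify two cancellations: a trivial one saying that the flat-space twistor $x_{AB}o^B$ annihilates the Euclidean spatial twistor operator, and a nontrivial one in which the derivative of the scalar prefactor $1+\tfrac12 mr^{-1}$ matches the Schwarzschildean Christoffel correction.

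First I would decompose the Sen derivative as $\nabla_{AB}\kappa_C = D_{AB}\kappa_C + \tfrac12 K_{ABC}{}^Q\kappa_Q$. By \eqref{OldDecay2} and the leading behaviour $\kappa_A = O(r)$, the second term is $o_\infty(r^{-5/2})\cdot O(r) = o_\infty(r^{-3/2})$, and one derivative of the $o_\infty(r^{-1/2})$ tail of the ansatz is also $o_\infty(r^{-3/2})$. Hence it suffices to show
\begin{equation*}
D_{(AB}\bigl[(1+\tfrac12 mr^{-1})\,x_{C)D}\,o^D\bigr] = o_\infty(r^{-3/2})
\end{equation*}
in the asymptotically Cartesian chart. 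Next I would write $D_{AB} = \partial_{AB} - \Gamma_{AB}$ (with $\Gamma_{AB}$ the spinor form of the Christoffel correction of $h_{ij}$). By \eqref{OldDecay1}, $\Gamma$ splits as a Schwarzschildean piece of pointwise order $m/r^2$ coming from $-2mr^{-1}\delta_{ij}$, plus an $o_\infty(r^{-5/2})$ piece; the latter, contracted against $\kappa\sim r$, is again absorbed into the error. A direct calculation in flat $\mathbb{R}^3$ using $\partial_{AB}x_{CD} = -\tfrac12(\epsilon_{AC}\epsilon_{BD}+\epsilon_{AD}\epsilon_{BC})$ gives $\partial_{(AB}(x_{C)D}o^D)=0$, since any tensor of the form $\epsilon_{(AC}o_{B)}$ vanishes identically by antisymmetry of $\epsilon$ versus symmetry over $(ABC)$. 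Thus $x_{AB}o^B$ is an \emph{exact} flat-space spatial twistor.

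Only two contributions of scale $m/r$ then survive: the derivative of the scalar prefactor, $\partial_{AB}(\tfrac12 mr^{-1})\cdot x_{CD}o^D$, and the Schwarzschildean Christoffel contribution $\Gamma^{(\mathrm{Schw})}_{AB}{}_C{}^Q\,x_{QD}\,o^D$ acting on the flat twistor. The nontrivial content of the proposition is that, after symmetrisation over $(ABC)$, these two cancel exactly when the prefactor is $1+\tfrac12 mr^{-1}$. This is the finite-dimensional shadow of the conformal covariance of the twistor equation: writing $-h_{ij} = \psi^4\delta_{ij} + o_\infty(r^{-3/2})$ with $\psi = 1+\tfrac12 mr^{-1}$, a valence-$1$ spatial twistor rescales by exactly $\psi$, which is precisely the ansatz. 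All leftover cross-terms are quadratic in $m/r$, hence $O(m^2/r^3)$, which is swallowed by $o_\infty(r^{-3/2})$.

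The main technical obstacle is the bookkeeping in this last cancellation: verifying that the totally symmetric $(ABC)$-projection of $\Gamma^{(\mathrm{Schw})}_{AB}{}_{C}{}^Q\,x_{QD}o^D$ agrees, up to sign, with $\partial_{(AB}(\tfrac12 mr^{-1})\,x_{C)D}o^D$, with no residual $\epsilon$-trace terms that would spoil the symmetry type. Once the spinor form of the Schwarzschild Christoffel is written out, one sees that both expressions reduce to multiples of the same $(ABC)$-symmetric tensor $r^{-3} x_{(AB}x_{C)D}o^D$ with opposite coefficients that combine to zero, which fixes the numerical factor $\tfrac12$ in the ansatz and completes the argument.
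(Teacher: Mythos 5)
Your proposal is correct and is essentially the calculation the paper itself merely asserts ("a direct calculation allows us to verify"): reduce $\nabla_{(AB}\kappa_{C)}$ to the flat derivative by absorbing the $K_{ABC}{}^{Q}\kappa_{Q}$ term and the non-Schwarzschildean parts of the connection into $o_\infty(r^{-3/2})$, observe that $x_{(AB}o_{|D|}\epsilon_{C)}{}^{D}$-type terms vanish by symmetry so that $x_{AB}o^{B}$ is an exact flat twistor, and check that the $O(m r^{-1})$ contributions from the prefactor derivative and the Schwarzschildean spin connection cancel. One caution on your conformal-covariance justification: under $\hat h_{ab}=\Omega^{2}h_{ab}$ with $\Omega=\psi^{2}$ the \emph{abstract} lower-index twistor rescales by $\Omega=\psi^{2}\simeq 1+mr^{-1}$, and it is only after accounting for the $\Omega^{-1/2}$ renormalisation of the spin dyad ($o_{A}\iota^{A}=1$ in the physical $\epsilon_{AB}$) that the \emph{frame components} $\kappa_{\mathbf{A}}$ pick up the net factor $\Omega^{1/2}=\psi\simeq 1+\tfrac12 mr^{-1}$; stated without this distinction, the weight-$\psi$ claim would appear to contradict the standard conformal weight of $\kappa_A$, so the explicit coefficient check you describe at the end is genuinely needed and does close the argument.
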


\medskip
\noindent
\textbf{Remark.} Formula \eqref{TwistorLeading} implies the
following expansion for $\xi_A$:
\begin{equation}
\xi_{\mathbf{A}}=(1-m r^{-1})o_{\mathbf{A}} +o_\infty(r^{-3/2}).\label{xi1Leading}
\end{equation}
For later reference we notice that
\begin{subequations}
\begin{eqnarray}
&& \nabla_{\mathbf{AB}} r  = - x_{\mathbf{AB}} r^{-1} + o_\infty(r^{-1/2}), \label{xi1LeadingA}\\
&& \nabla_{(\mathbf{AB}} \xi_{\mathbf{C})} = - m r^{-3} x_{(\mathbf{AB}} o_{\mathbf{C})} + o_\infty(r^{-5/2}). \label{xi1LeadingB}  
\end{eqnarray}
\end{subequations}

\subsection{Existence and uniqueness of spinors with twistor asymptotics}
In this section we prove that given a spinor $\kappa_{A}$ satisfying
equations \eqref{AsymptoticSpatialTwistorEquation} and
$\xi_{\mathbf{A}}=o_{\mathbf{A}}+o_\infty(r^{-1/2})$, the asymptotic
expansion \eqref{TwistorLeading} is unique up to a translation.

\begin{proposition}\label{ExistensTwistorAsymptotics}
Given an asymptotic end for which 
\eqref{OldDecay1}-\eqref{OldDecay2} hold, there exists
\begin{equation}
\kappa_{\mathbf{A}}=o_\infty(r^{3/2}),\label{AsymptoticAssumptions1}
\end{equation}
such that
\begin{equation}
\xi_{\mathbf{ABC}}=o_\infty(r^{-3/2}), \quad 
\xi_{\mathbf{A}}=o_{\mathbf{A}}+o_\infty(r^{-1/2}).
\label{AsymptoticAssumptions2}
\end{equation}
The spinor 
$\kappa_{\mathbf{A}}$ is unique up to order $o_\infty(r^{-1/2})$,
apart from a (complex) constant term.
\end{proposition}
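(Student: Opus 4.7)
The ansatz supplied by Proposition~\ref{Proposition:AsymptoicSpatialTwistor},
\[
\kappa_{\mathbf{A}}=\bigl(1+\tfrac{1}{2}m\,r^{-1}\bigr)x_{\mathbf{AB}}o^{\mathbf{B}},
\]
already does the job: it has linear growth and hence lies in $H^{\infty}_{3/2}$; equation~\eqref{TwistorLeading} gives $\xi_{\mathbf{ABC}}=o_{\infty}(r^{-3/2})$; and the remark containing~\eqref{xi1Leading} yields $\xi_{\mathbf{A}}=(1-m\,r^{-1})o_{\mathbf{A}}+o_{\infty}(r^{-3/2})$, which has the required form since $m\,r^{-1}\in o_{\infty}(r^{-1/2})$. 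So the existence half of the proposition is immediate.

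\textbf{Uniqueness, leading order.} For uniqueness I would let $\delta_{\mathbf{A}}\equiv\kappa^{(1)}_{\mathbf{A}}-\kappa^{(2)}_{\mathbf{A}}$ be the difference of two candidates. Linearity gives
\[
\delta_{\mathbf{A}}\in H^{\infty}_{3/2},\qquad \nabla_{(\mathbf{AB}}\delta_{\mathbf{C})}\in H^{\infty}_{-3/2},\qquad \xi_{\mathbf{A}}(\delta)\in H^{\infty}_{-1/2},
\]
and the target is $\delta_{\mathbf{A}}=c_{\mathbf{A}}+o_{\infty}(r^{-1/2})$ for some constant spinor $c_{\mathbf{A}}$. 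My plan is a coefficient-matching argument adapted to the half-integer decay grid in~\eqref{OldDecay1}. Under \eqref{OldDecay1}--\eqref{OldDecay2} the Sen operator equals the flat partial derivative $\partial_{\mathbf{AB}}$ modulo an $O(r^{-2})$ connection correction, and $\Omega_{\mathbf{ABCD}}=o_{\infty}(r^{-5/2})$; at leading order the symmetrised equation therefore reduces to the flat-space spatial twistor equation $\partial_{(\mathbf{AB}}\delta^{(0)}_{\mathbf{C})}=0$, whose general solutions are the Euclidean valence-1 Killing spinors $\delta^{(0)}_{\mathbf{A}}=c_{\mathbf{A}}+x_{\mathbf{AB}}k^{\mathbf{B}}$. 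The flat-space version of the divergence constraint $\partial^{\mathbf{AB}}\delta^{(0)}_{\mathbf{B}}=0$ applied to this form gives an expression proportional to $k^{\mathbf{A}}$ (since $\partial^{\mathbf{AB}}x_{\mathbf{BC}}$ is, up to a nonzero numerical factor, $\delta^{\mathbf{A}}_{\mathbf{C}}$), forcing $k^{\mathbf{A}}=0$. Hence the leading coefficient of $\delta_{\mathbf{A}}$ is a pure constant.

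\textbf{Iteration and main obstacle.} After subtracting $c_{\mathbf{A}}$, the same game is played at each successive order in the asymptotic expansion of $\delta_{\mathbf{A}}-c_{\mathbf{A}}$: the symmetrised twistor equation reads $\partial_{(\mathbf{AB}}\delta^{(n)}_{\mathbf{C})}=S^{(n)}_{\mathbf{ABC}}$ with source $S^{(n)}$ built from $\Omega_{\mathbf{ABCD}}$, the connection remainder $\nabla-\partial$, and the already-solved lower-order coefficients, supplemented by the corresponding divergence constraint; the same kernel analysis eliminates every non-trivial homogeneous contribution, driving the residue down into $H^{\infty}_{-1/2}$. The main technical obstacle is keeping this induction alive: one has to check at every step that $S^{(n)}$ lies in the image of the flat symmetrised-gradient operator on the relevant weighted space, that the natural half-integer power grid forced by \eqref{OldDecay1} is respected, and that no logarithmic terms are generated by resonant exponents. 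This is the same flavour of weighted-space book-keeping used in the derivation of~\eqref{TwistorLeading} and in the existence analysis of Section~\ref{Section:AsymptoticallyEuclideanData}, so I expect it to go through and to yield the stated uniqueness modulo a complex constant.
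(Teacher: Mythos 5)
Your existence argument is fine and coincides with the paper's: the explicit leading term from Proposition~\ref{Proposition:AsymptoicSpatialTwistor} does the job, and perturbing $\kappa_{\mathbf{A}}$ by $o_\infty(r^{-1/2})$ does not affect the conditions \eqref{AsymptoticAssumptions2}.

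The uniqueness argument has a genuine gap, and it sits exactly where you flagged your ``main technical obstacle''. The conclusion you need is $\delta_{\mathbf{A}}=c_{\mathbf{A}}+o_\infty(r^{-1/2})$, i.e. $D_{AB}\delta_C=o_\infty(r^{-3/2})$. Since $\nabla_{AB}\delta_C=\xi_{ABC}(\delta)-\xi_{(A}(\delta)\epsilon_{B)C}$, and the hypotheses only give $\xi_A(\delta)=o_\infty(r^{-1/2})$, the full gradient is a priori only $o_\infty(r^{-1/2})$, which integrates to $\delta_{\mathbf{A}}=c_{\mathbf{A}}+o_\infty(r^{1/2})$ --- one full order short of the claim. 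Everything therefore hinges on improving the decay of the \emph{divergence part} $\xi_A(\delta)$ from $o_\infty(r^{-1/2})$ to $o_\infty(r^{-3/2})$. Your ``divergence constraint'' is just the hypothesis on $\xi_{\mathbf{A}}$ and carries no information below order $r^{-1/2}$; it kills the linear homogeneous solution $x_{\mathbf{AB}}k^{\mathbf{B}}$ at leading order but cannot drive $\xi_A(\delta)$ any lower. The missing ingredient is the prolongation of the overdetermined system: the integrability conditions \eqref{Senxi1a}--\eqref{Senxi1b} express $\nabla_{AB}\xi_C$ entirely in terms of $\nabla\xi_{ABC}$, $\Psi_{ABCD}\kappa^D$, $K\xi_A$, $\Omega\xi$, all of which the decay assumptions and the multiplication lemma place in $o_\infty(r^{-5/2})$. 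Integrating $D_{AB}\xi_C=o_\infty(r^{-5/2})$ gives $\xi_A=\text{const}+o_\infty(r^{-3/2})$, the constant being excluded by $\xi_A=o_\infty(r^{-1/2})$; only then does $D_{AB}\delta_C=o_\infty(r^{-3/2})$ follow. This two-step bootstrap is the paper's proof, and no version of your induction can avoid it.

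A second, structural problem with your route: spinors in $H^\infty_\beta$ do not admit asymptotic expansions in powers of $r$ --- the remainders $o_\infty(r^{-3/2})$ in \eqref{OldDecay1}--\eqref{OldDecay2} are decay statements, not further expandable series --- so ``the coefficient $\delta^{(n)}_{\mathbf{A}}$ at order $n$'' is not well defined and the coefficient-matching induction is not well founded. The correct substitute is precisely a bootstrap on decay rates in the weighted spaces, which, once set up, terminates after the two steps above rather than requiring an infinite iteration or any discussion of logarithmic resonances.
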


\medskip
\noindent
\textbf{Remark 1.} The complex constant term arising in Proposition~\ref{ExistensTwistorAsymptotics}
contains four real parameters. In the sequel, given a particular choice of asymptotically Cartesian coordinates and frame, we will set this constant term to zero. For any other choice of coordinates and frames, the constant can be transformed away by a translation.

\medskip
\noindent
\textbf{Remark 2.} The condition $\xi_{\mathbf{ABC}}=o_\infty(r^{-3/2})$
implies $\xi_{ABC}\in L^2$. Furthermore the conditions in Proposition~\ref{ExistensTwistorAsymptotics} are coordinate independent. 

\begin{proof}
A direct calculation shows that the expansion
\eqref{TwistorLeading} yields \eqref{xi1Leading} and
$\xi_{\mathbf{ABC}}=o_\infty(r^{-3/2})$. Hence,
\eqref{TwistorLeading} gives a solution of the desired form. In
order to prove uniqueness we make use of the linearity of the
integrability conditions \eqref{Senxi1a}-\eqref{Senxi1b}. Note that the translational freedom gives an ambiguity of a constant term in
$\kappa_{\mathbf{A}}$. Let
\begin{equation}
\mathring\kappa_{\mathbf{A}} \equiv 
\left(1+\tfrac{1}{2}m r^{-1} \right)x_{\mathbf{AB}} o^{\mathbf{B}}
\label{TwistorLeadingTerms}
\end{equation}
Let $\breve\kappa_{\mathbf{A}}$, be an arbitrary solution to the system \eqref{AsymptoticAssumptions1}, \eqref{AsymptoticAssumptions2}. Furthermore, let
$\kappa_{\mathbf{A}}\equiv\breve\kappa_{\mathbf{A}}-\mathring\kappa_{\mathbf{A}}$. We
then have
\[
\xi_{ABC}=o_\infty(r^{-3/2}), \quad  \xi_{A}=o_\infty(r^{-1/2}), \quad  \kappa_{A}=o_\infty(r^{3/2}).
\]
 To obtain the desired conclusion we only need to prove that
 $\kappa_{\mathbf{A}}=C_{\mathbf{A}}+o_\infty(r^{-1/2})$, where
 $C_{\mathbf{A}}$ is a constant. This is equivalent to
 $D_{AB}\kappa_{C}=o_\infty(r^{-3/2})$. Note that we now have
 coordinate independent statements to prove.

\medskip
We note that from \eqref{OldDecay1}-\eqref{OldDecay2} it follows that
\[
K_{ABCD}=o_\infty(r^{-5/2}), \quad \Psi_{ABCD}=o_\infty(r^{-3+\varepsilon}),
\]
with $\varepsilon>0$. From \eqref{SenDiffKappaSplit} and
a sharp multiplication result for weighted Sobolev
spaces given in Lemma 2.4 in \cite{Max06} one will get
\begin{align*}
D_{AB}\kappa_{C}={}&\xi_{ABC}-\xi_{(A}\epsilon_{B)C}-\tfrac{1}{2}K_{ABC}{}^D\kappa_D 
= o_\infty(r^{-1/2}).
\end{align*}
Integrating the latter yields 
\[
\kappa_{A}=o_\infty(r^{1/2}).
\]
The constant of integration is incorporated in the remainder
term. Estimating all terms in \eqref{Senxi1a} and
\eqref{Senxi1b} gives
\begin{subequations}
\begin{align}
\nabla_{AB}\xi^B={}&
o_\infty(r^{-5/2}),\label{estSenxi1a}\\
\nabla_{(AB}\xi_{C)}
={}&
o_\infty(r^{-5/2}).\label{estSenxi1b}
\end{align}
\end{subequations}
Hence, $\nabla_{AB}\xi_C=o_\infty(r^{-5/2})$,
and therefore $D_{AB}\xi_C=o_\infty(r^{-5/2})$. 
Integrating this yields $\xi_A=o_\infty(r^{-3/2})$. 
Here the constants of integration are forced 
to vanish by the condition
$\xi_A\negthinspace =\negthinspace o_\infty(r^{-1/2})$.  Hence,
\begin{align*}
D_{AB}\kappa_{C}={}&
\xi_{ABC} 
-\xi_{(A}\epsilon_{B)C} 
-\tfrac{1}{2}K_{ABC}{}^D\kappa_D 
= o_\infty(r^{-3/2}).
\end{align*}
 from where the result follows.
\end{proof}

 From the asymptotic solutions we can obtain a globally defined spinor
$\mathring{\kappa}_{A}$ on $\mathcal{S}$ that will act as a seed for
our approximate twistor.

\begin{corollary}\label{corkapparing}
There are spinors $\mathring{\kappa}_{A}$, defined everywhere on
$\mathcal{S}$, such that the asymptotics at each end is given by
\eqref{TwistorLeading}. Different choices of $\mathring{\kappa}_{A}$ can only differ by a spinor in $H^\infty_{-1/2}$.
\end{corollary}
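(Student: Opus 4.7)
The plan is to construct $\mathring{\kappa}_A$ by patching together the explicit asymptotic expressions \eqref{TwistorLeading} on each end with a smooth extension into the compact interior, and then to read off uniqueness directly from Proposition~\ref{ExistensTwistorAsymptotics}.

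First, enumerate the asymptotic ends $E_1,\dots,E_N$ of $\mathcal{S}$. On each end $E_k$, choose asymptotically Cartesian coordinates $x^i_{(k)}$ with radial function $r_k$ together with a normalised dyad $\{o^{(k)}_A,\iota^{(k)}_A\}$, and let $m_k$ be the corresponding ADM mass. Following \eqref{TwistorLeadingTerms}, define on $E_k$ the smooth spinor
\[
\mathring{\kappa}^{(k)}_{\mathbf{A}} \equiv \bigl(1+\tfrac{1}{2} m_k r_k^{-1}\bigr) x_{\mathbf{AB}}^{(k)}\, o^{(k)\mathbf{B}}.
\]
By Proposition~\ref{Proposition:AsymptoicSpatialTwistor}, this spinor satisfies \eqref{AsymptoticSpatialTwistorEquation} and realises the leading behaviour prescribed in \eqref{TwistorLeading} on $E_k$.

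Next, pick smooth cutoff functions $\chi_k\colon \mathcal{S}\to [0,1]$ such that $\chi_k \equiv 1$ outside a compact subset of $E_k$, $\chi_k \equiv 0$ outside a slightly larger neighbourhood of the end $E_k$, and the supports of the $\chi_k$ are mutually disjoint. Since $\mathcal{S}$ minus the union of the ends is compact, such cutoffs exist. Define globally
\[
\mathring{\kappa}_A \equiv \sum_{k=1}^{N} \chi_k\, \mathring{\kappa}^{(k)}_A,
\]
extended by zero across the interior. This is a smooth spinor field on $\mathcal{S}$ whose restriction to each end $E_k$ agrees (outside a compact set) with $\mathring{\kappa}^{(k)}_A$, and hence it satisfies \eqref{TwistorLeading} at each end. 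This gives existence.

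For uniqueness, let $\mathring{\kappa}_A$ and $\mathring{\kappa}'_A$ be two global spinors constructed as above, and set $\delta_A \equiv \mathring{\kappa}_A-\mathring{\kappa}'_A$. Both choices satisfy the asymptotic hypotheses \eqref{AsymptoticAssumptions1}--\eqref{AsymptoticAssumptions2} on every end (with the constant term normalised to zero in the chosen frames, as per Remark~1 after Proposition~\ref{ExistensTwistorAsymptotics}). Applying Proposition~\ref{ExistensTwistorAsymptotics} on each end yields $\delta_{\mathbf{A}} = o_\infty(r_k^{-1/2})$ on $E_k$. Since $\delta_A$ is smooth and the complement of the ends is compact, the derivative bounds $\partial^l \delta_A = o(r^{-1/2-|l|})$ at every end together with smoothness on the interior give $\delta_A\in H^\infty_{-1/2}$ globally.

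The only mildly subtle point is the alignment between the existence statement and the uniqueness statement: one must check that the globally patched $\mathring{\kappa}_A$ still obeys \eqref{AsymptoticAssumptions1}--\eqref{AsymptoticAssumptions2} at each end so that Proposition~\ref{ExistensTwistorAsymptotics} is applicable; this is immediate from the construction, since on each $E_k$ the cutoff acts as the identity outside a compact set and $\mathring{\kappa}^{(k)}_A$ has exactly the required asymptotic form. No elliptic or PDE argument is needed here; all the work has been done in Propositions~\ref{Proposition:AsymptoicSpatialTwistor} and \ref{ExistensTwistorAsymptotics}.
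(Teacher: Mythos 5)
Your proposal is correct and follows essentially the same route as the paper: existence at each end from Proposition~\ref{ExistensTwistorAsymptotics} (via the explicit leading term \eqref{TwistorLeadingTerms}), a smooth cutoff-and-paste to get a global $\mathring{\kappa}_A$, and the observation that any two such choices differ by something that is $o_\infty(r^{-1/2})$ at each end and smooth on the compact interior, hence lies in $H^\infty_{-1/2}$. The paper's own proof is just a terser version of exactly this argument.
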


\begin{proof}
Proposition~\ref{ExistensTwistorAsymptotics} gives the existence at
each end. Smoothly cut off these functions, and paste them
together. This gives a smooth spinor $\mathring{\kappa}_{A}$ defined
everywhere on $\mathcal{S}$. Furthermore
$\nabla_{(AB}\mathring{\kappa}_{C)}\in H^\infty_{-3/2}$.
\end{proof}

\subsection{Fredholm properties}
\label{Section:ApproximateTWinAEM}

In this section we study the invertibility properties of the
approximate twistor operator $\mathbf{L}$ given by equation
\eqref{ApproximateTwistorEquation} on asymptotically Euclidean
manifolds. The necessary elliptic theory for this analysis has been
developed in e.g. \cite{Can81,ChoChr81,ChrOMu81,Loc81}, and has been
adapted to our context in \cite{BaeVal10b}.

\medskip
The decay assumptions \eqref{OldDecay1}-\eqref{OldDecay2} imply that
$\mathbf{L}$ is an \emph{asymptotically homogeneous elliptic operator}
---see e.g. \cite{Can81,Loc81}.  This is the standard assumption on
elliptic operators on asymptotically Euclidean manifolds. It follows from \cite{Can81}, Theorem 6.3 that:

\begin{lemma}
The elliptic operator 
\[
\mathbf{L}: H^{2}_{\delta} \rightarrow H^{0}_{\delta-2}, 
\]
where $\delta$ is not a non-negative integer is a linear operator with
finite dimensional Kernel and closed range.
\end{lemma}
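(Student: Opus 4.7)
The plan is to invoke Theorem 6.3 of \cite{Can81}, so the task reduces to verifying its hypotheses for $\mathbf{L}$. Three things must be checked: linearity (immediate from \eqref{ApproximateTwistorEquation}), that $\mathbf{L}$ is an asymptotically homogeneous elliptic operator of order $2$ in the sense of \cite{Can81}, and that $\delta$ avoids the exceptional weights associated to $\mathbf{L}$. Ellipticity of order $2$ was already established in the previous lemma.

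To verify asymptotic homogeneity, I would take the manifestly elliptic representation
\[
\mathbf{L}(\kappa_{A}) = \tfrac{2}{3}\nabla^{BC}\nabla_{BC}\kappa_{A}+\tfrac{2}{3}\Omega_{ABCD}\nabla^{CD}\kappa^B+\tfrac{2}{9}K\nabla_{AB}\kappa^B
\]
from the preceding proof, and expand each Sen derivative as a flat partial derivative plus corrections built out of the metric perturbation (through Christoffel symbols) and the extrinsic curvature (through the $K_{ABCD}$ term in the relation between $\nabla_{AB}$ and $D_{AB}$). The decay assumptions \eqref{OldDecay1}--\eqref{OldDecay2} imply that the metric perturbation is $O(r^{-1})$ with matching improvements on derivatives, while $\Omega_{ABCD}$, $K$ and $K_{ABCD}$ are all $o_\infty(r^{-5/2})$. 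Consequently, at each asymptotic end $\mathbf{L}$ approaches, up to remainders of the required decay, a constant-coefficient operator proportional to the flat Euclidean Laplacian acting componentwise on spinor fields; this is precisely the notion of asymptotic homogeneity used in \cite{Can81}.

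The exceptional weights are then those of this model operator at infinity. Since the model is a multiple of the scalar flat Laplacian on $\Real^3$, the exceptional set consists of the integers, and the non-negative integers are exactly those that obstruct the Fredholm-type conclusion (finite-dimensional kernel together with closed range). The assumption on $\delta$ in the statement is therefore precisely the hypothesis required by \cite{Can81}, Theorem 6.3, whose application yields the lemma.

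There is no genuine analytic obstacle here: the content is essentially a bookkeeping exercise, translating the conventions of the present paper (the space-spinor Sen derivative, the weighted Sobolev spaces of \cite{Bar86}) into the framework of \cite{Can81}, and confirming that the decay of $h_{ab}$ and $K_{ab}$ is sufficient to place the lower-order coefficients of $\mathbf{L}$ in the admissible class. All genuinely analytic work is already contained in the cited elliptic theory.
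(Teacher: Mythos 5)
Your proposal follows exactly the route the paper takes: the paper simply notes that the decay assumptions \eqref{OldDecay1}--\eqref{OldDecay2} make $\mathbf{L}$ an asymptotically homogeneous elliptic operator and then cites Theorem 6.3 of \cite{Can81}, which is precisely your argument with the verification of asymptotic homogeneity and of the exceptional-weight condition spelled out in more detail. The proposal is correct and matches the paper's (very terse) proof.
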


We will also need the following ancillary result ---cfr. \cite{ChrOMu81} for an analogous result for Killing vectors.

\begin{proposition}
\label{Proposition:TrivialityKernel}
Let $\nu_{A}\in H^\infty_{-1/2}$ such that
$\nabla_{(AB}\nu_{C)}=0$. Then $\nu_{A}=0$ on $\mathcal{S}$.
\end{proposition}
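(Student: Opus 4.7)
My plan is to reduce the overdetermined equation $\nabla_{(AB}\nu_{C)}=0$ to a closed first-order linear system for the pair $(\nu_A,\xi_A)$, with $\xi_A\equiv\tfrac{2}{3}\nabla_A{}^B\nu_B$, and then combine a decay bootstrap in each asymptotic end with a unique continuation argument.

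Since $\nabla_{(AB}\nu_{C)}=0$ means $\xi_{ABC}=0$, relation \eqref{SenDiffKappaSplit} gives $\nabla_{AB}\nu_C=-\xi_{(A}\epsilon_{B)C}$, and substituting $\xi_{ABC}=0$ into \eqref{Senxi1a}--\eqref{Senxi1b} reduces those relations to
\[
\nabla_{AB}\xi^B=\tfrac{1}{2}K\xi_A,\qquad \nabla_{(AB}\xi_{C)}=\Psi_{ABCD}\nu^D-\Omega_{ABCD}\xi^D.
\]
These determine $\nabla_{AB}\xi_C$ completely as a linear expression in $(\nu_A,\xi_A)$ with coefficients built from $\Psi_{ABCD}$, $\Omega_{ABCD}$ and $K$, so $(\nu_A,\xi_A)$ satisfies a closed first-order linear system $\nabla_{AB}(\nu,\xi)=M(\nu,\xi)$. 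By Lemma~\ref{Lemma:SecondDerivative}, $\nu_A$ also satisfies a second-order linear elliptic equation with smooth coefficients that decay at infinity.

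Next I would run a decay bootstrap on each asymptotic end. The hypothesis $\nu_A\in H^\infty_{-1/2}$ gives $\nu_A=o_\infty(r^{-1/2})$ and hence $\xi_A=o_\infty(r^{-3/2})$ via the definition of $\xi_A$ and the decay of $K_{ABCD}$. The equation for $\nabla_{(AB}\xi_{C)}$ together with the curvature decay estimates $|\Psi_{ABCD}|=o(r^{-3+\varepsilon})$ and $|\Omega_{ABCD}|,|K|=o(r^{-5/2})$ coming from \eqref{OldDecay1}--\eqref{OldDecay2} and a weighted Sobolev multiplication estimate gives a sharper bound on $|\nabla\xi|$. Integrating (the would-be integration constant at infinity is killed by $\xi_A\to 0$) sharpens $\xi_A$, and feeding the sharper $\xi_A$ back through $\nabla_{AB}\nu_C=-\xi_{(A}\epsilon_{B)C}$ sharpens $\nu_A$. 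Iterating shows that $(\nu_A,\xi_A)\in H^\infty_\delta$ for every $\delta\in\mathbb{R}$; both spinors decay faster than any polynomial at each end.

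Finally, I would invoke unique continuation. Super-polynomial decay, combined with the second-order linear elliptic equation for $\nu_A$, forces $\nu_A\equiv 0$ on a neighborhood of infinity. The closed first-order system then propagates this vanishing along arbitrary smooth curves in $\mathcal{S}$: its restriction to any such curve is a linear ODE for $(\nu,\xi)$, so ODE uniqueness extends $\nu_A\equiv 0$ from the neighborhood of infinity to the entire connected manifold. The principal obstacle I expect is precisely this unique continuation step, which for smooth (rather than real-analytic) coefficients requires either a Carleman-type estimate for the induced elliptic equation or a careful direct analysis of the first-order ODE structure along radial rays, exploiting the fact that the flat limit of the restricted ODE admits no non-trivial decaying solutions.
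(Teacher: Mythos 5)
Your overall strategy is essentially the one the paper uses, except that the paper outsources the hard analytic content: its proof consists of noting (via Lemma~\ref{Lemma:SecondDerivative}) that $\nabla_{AB}\nabla_{CD}\nu_E$ is a linear combination of $\nu$ and $\nabla\nu$ with suitably decaying coefficients, and then citing Theorem~20 of \cite{BaeVal10b}, an adaptation of the Killing-vector argument of \cite{ChrOMu81}, which packages both the decay analysis at infinity and the propagation of vanishing along curves. Your reduction to the closed first-order system for $(\nu_A,\xi_A)$ via \eqref{SenDiffKappaSplit} and \eqref{Senxi1a}--\eqref{Senxi1b} is exactly the content of that lemma, your bootstrap is sound, and your final ODE-propagation step is correct: once $(\nu_A,\xi_A)$ vanishes on any open set (indeed at any single point), linear ODE uniqueness along curves forces $\nu_A\equiv 0$ on all of $\mathcal{S}$.

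The genuine gap is the step you yourself flag: producing a point near infinity where $(\nu_A,\xi_A)$ vanishes. The claim that super-polynomial decay together with a second-order linear elliptic equation with decaying coefficients forces vanishing in a neighbourhood of infinity is false as a general principle (strong unique continuation at infinity fails for general smooth, even rapidly decaying, lower-order coefficients without Carleman-type input), so as stated this step does not go through, and the two remedies you mention are left entirely unexecuted. Moreover, elliptic unique continuation for $\mathbf{L}$ is the wrong tool: what makes the statement true is the overdeterminedness, namely that the \emph{full} first derivative of $(\nu_A,\xi_A)$ is a zeroth-order linear expression in $(\nu_A,\xi_A)$. The mechanism behind the cited theorem is an asymptotic-integration (Volterra) analysis of this first-order system along rays: the flat model system $D\nu=-\xi\epsilon$, $D\xi=0$ has polynomial fundamental solutions, the perturbation is controlled by \eqref{OldDecay1}--\eqref{OldDecay2}, and a solution decaying to zero must have vanishing asymptotic data and hence vanish for large $r$. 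This is not automatic --- conjugating by the polynomially growing flat fundamental matrix costs powers of $r$ against the coefficient decay, so the specific rates (and your bootstrap) must be used with care. As written, your proof is therefore incomplete at precisely the point where the paper invokes Theorem~20 of \cite{BaeVal10b}.
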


\begin{proof}
We will use Theorem 20 of \cite{BaeVal10b}, which is an adaptation of a
result in \cite{ChrOMu81}. From Lemma \ref{Lemma:SecondDerivative} it
follows that $\nabla_{AB} \nabla_{CD} \nu_{E}$ can be expressed as a
linear combination of lower order derivatives with smooth coefficients
with the proper decay. Thus, Theorem 20 of \cite{BaeVal10b} applies
with $m=1$ and one obtains the desired result.
\end{proof}

\medskip
We are now in the position to discuss the Kernel of the approximate
twistor operator in the case of spinor fields that decay at infinity.

\begin{proposition}\label{EllipticKernel}
Let $\nu_{A}\in H^\infty_{-1/2}$. If $\mathbf{L}(\nu_{A})=0$, then
$\nu_{A}=0$.
\end{proposition}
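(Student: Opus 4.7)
The plan is to exploit the self-adjoint factorisation $\mathbf{L} = {\bm\Phi}^* \circ {\bm\Phi}$ by pairing the equation $\mathbf{L}(\nu_A) = 0$ with $\nu_A$ itself in the inner product on $\mathfrak{S}_1$, and then using the integration-by-parts identity \eqref{IntegrationbyParts} to convert this into a pointwise-nonnegative bulk integral of $\Phi(\nu)$ plus a boundary term at infinity. The positivity $\zeta_{ABC}\hat{\zeta}^{ABC} \geq 0$ recorded in Section~\ref{Section:SpaceSpinors} is what makes this strategy effective, and the decay hypothesis $\nu_A \in H^\infty_{-1/2}$ is precisely what will kill the boundary contribution in the limit.

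Concretely, I would take $\mathcal{U}_R \subset \mathcal{S}$ to be a large coordinate ball of radius $R$ in each asymptotic end and apply \eqref{IntegrationbyParts} with $\kappa_C = \nu_C$ and $\zeta_{ABC} = {\bm\Phi}(\nu)_{ABC} = \nabla_{(AB}\nu_{C)}$. Since $\hat{\zeta}^{ABC}$ is totally symmetric, the left-hand side becomes
\[
\int_{\mathcal{U}_R} {\bm\Phi}(\nu)_{ABC}\widehat{{\bm\Phi}(\nu)}^{ABC}\,\mbox{d}\mu,
\]
while the volume term on the right-hand side reduces (using the definition \eqref{FormalAdjoint} of ${\bm\Phi}^*$) to a multiple of $\int_{\mathcal{U}_R} \nu^C\widehat{\mathbf{L}(\nu)_C}\,\mbox{d}\mu$, which vanishes identically by assumption. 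Hence one is left with
\[
\int_{\mathcal{U}_R}{\bm\Phi}(\nu)_{ABC}\widehat{{\bm\Phi}(\nu)}^{ABC}\,\mbox{d}\mu = \int_{\partial\mathcal{U}_R} n_{AB}\nu_C\widehat{{\bm\Phi}(\nu)}^{ABC}\,\mbox{d}S.
\]

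The step I regard as the main obstacle — though in fact it is routine — is showing that the boundary integral tends to zero as $R \to \infty$. From $\nu_A \in H^\infty_{-1/2}$ one has pointwise $|\nu_A| = o(r^{-1/2})$ and $|\nabla_{AB}\nu_C| = o(r^{-3/2})$, so the integrand on $\partial\mathcal{U}_R$ is pointwise $o(r^{-2})$, while $\mbox{d}S = O(r^2)\,\mbox{d}\Omega$; the boundary integral is therefore $o(1)$ and vanishes in the limit. Combined with the fact that the manifold has only finitely many asymptotic ends, this forces
\[
\int_\mathcal{S} {\bm\Phi}(\nu)_{ABC}\widehat{{\bm\Phi}(\nu)}^{ABC}\,\mbox{d}\mu = 0.
\]

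Pointwise nonnegativity of the integrand then yields $\nabla_{(AB}\nu_{C)} = 0$ everywhere on $\mathcal{S}$, so $\nu_A$ solves the spatial twistor equation while still lying in $H^\infty_{-1/2}$. At this point Proposition~\ref{Proposition:TrivialityKernel} applies directly and gives $\nu_A = 0$, completing the proof.
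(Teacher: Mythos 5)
Your argument is correct and coincides with the paper's own proof: both apply the identity \eqref{IntegrationbyParts} with $\zeta_{ABC}=\nabla_{(AB}\nu_{C)}$, use the decay $\nu_A\in H^\infty_{-1/2}$ to show the boundary integrand is $o(r^{-2})$ so the surface term vanishes, conclude $\nabla_{(AB}\nu_{C)}=0$ from positivity, and finish with Proposition~\ref{Proposition:TrivialityKernel}. The only cosmetic difference is that the paper cites a multiplication lemma for the $o(r^{-2})$ estimate where you argue pointwise, which is equivalent under the paper's conventions for $H^\infty_\beta$.
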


\begin{proof}
Using the identity \eqref{IntegrationbyParts} with
$\zeta_{ABC}= \nabla_{(AB} \nu_{C)}$ and assuming that
$\mathbf{L}(\nu_{C})=0$, one obtains
\[
\int_{\mathcal{S}} \nabla^{AB}\nu^{C} \widehat{\nabla_{(AB}
\nu_{C)}} \mbox{d}\mu = \int_{\partial\mathcal{S}_\infty}
n^{AB}\nu^{C} \widehat{\nabla_{(AB}\nu_{C)}} \mbox{d}S,
\]
where $\partial S_\infty$ denotes the sphere at infinity. Assume now,
that $\nu_{A}\in H^\infty_{-1/2}$. It follows that $\nabla_{(AB}
\nu_{C)}\in H^\infty_{-3/2}$ and furthermore, using the finer
multiplication Lemma 15 of \cite{BaeVal10b} that
\[
n^{AB} \nu^{C}
\widehat{\nabla_{(AB}\nu_{C)}} = o(r^{-2}).
\]
 The integration of the
latter over a finite sphere of sufficiently large radius is of type
$o(1)$. Thus one has that
\[
\int_{\partial\mathcal{S}_\infty}
n^{AB}\nu^{C} \widehat{\nabla_{(AB}\nu_{C)}} \mbox{d}S=0,
\]
 from where
\[
\int_{\mathcal{S}} \nabla^{AB}\nu^{C} \widehat{\nabla_{(AB} \nu_{C)}} \mbox{d}\mu =0.
\]
Therefore, one concludes that 
\[
\nabla_{(AB} \nu_{C)}=0.
\]
That is, $\nu_{A}$ has to be a solution to the spatial twistor
equation. Using Proposition~\ref{Proposition:TrivialityKernel} it
follows that $\nu_{A}= 0$ on $\mathcal{S}$.
\end{proof}

\subsection{Existence of approximate twistors}
We are now in the position of providing an existence proof to
solutions to equation \eqref{ApproximateTwistorEquation} with the
asymptotic behaviour discussed in section \ref{Section:DecayKappa}.

\begin{theorem}
\label{Theorem:ExistenceTW}
Given an asymptotically Euclidean initial data set
$(\mathcal{S},h_{ab},K_{ab})$ satisfying the asymptotic conditions
\eqref{OldDecay1}-\eqref{OldDecay2}, there exists a smooth unique
solution to equation \eqref{ApproximateTwistorEquation} with
asymptotic behaviour at each end given by \eqref{TwistorLeading}.
\end{theorem}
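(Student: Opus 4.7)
The strategy is to split the sought solution as $\kappa_A = \mathring{\kappa}_A + \nu_A$, with $\mathring{\kappa}_A$ the globally defined seed spinor from Corollary \ref{corkapparing} (carrying the required leading asymptotics \eqref{TwistorLeading}) and $\nu_A \in H^\infty_{-1/2}$ a decaying correction to be determined. Under this ansatz the approximate twistor equation reduces to the inhomogeneous linear elliptic problem
\[
\mathbf{L}(\nu_A) = -\mathbf{L}(\mathring{\kappa}_A).
\]
As a preliminary step I would verify that the source has the correct decay: by Corollary \ref{corkapparing} one has $\nabla_{(AB}\mathring{\kappa}_{C)} \in H^\infty_{-3/2}$, and since $\mathbf{L} = {\bm \Phi}^* \circ {\bm \Phi}$ applies one further derivative combined with multiplication by $\Omega_{ABCD}\in H^\infty_{-5/2}$, a standard weighted multiplication estimate places $\mathbf{L}(\mathring{\kappa}_A) \in H^\infty_{-5/2}$.

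For existence I would invoke the Fredholm theory recorded in Section \ref{Section:ApproximateTWinAEM}: the operator $\mathbf{L}: H^2_{-1/2}\to H^0_{-5/2}$ is Fredholm, with closed range and finite-dimensional kernel since the weight $-1/2$ is admissible. Elliptic regularity places every element of this kernel into $H^\infty_{-1/2}$, whence Proposition \ref{EllipticKernel} forces it to vanish. Because $\mathbf{L}$ is formally self-adjoint, the Fredholm alternative via the $L^2$ duality between $H^0_{-5/2}$ and $H^0_{-1/2}$ (the dual weight in dimension three being $-3+2-(-1/2)=-1/2$) identifies the cokernel with the same trivial kernel. Hence $\mathbf{L}$ is surjective onto $H^0_{-5/2}$, yielding a unique $\nu_A \in H^2_{-1/2}$ solving the reduced equation; an elliptic bootstrap on the smooth coefficients and source upgrades it to $\nu_A \in H^\infty_{-1/2}$. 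The combination $\kappa_A := \mathring{\kappa}_A + \nu_A$ is then smooth and exhibits the prescribed asymptotics \eqref{TwistorLeading}.

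Uniqueness follows immediately: any two such solutions, after using the translational freedom of Remark 1 after Proposition \ref{ExistensTwistorAsymptotics} to align the constant term, have their difference in $H^\infty_{-1/2}\cap \ker \mathbf{L}$ and hence vanishing by Proposition \ref{EllipticKernel}. The chief technical point is the correct identification of the cokernel via self-adjointness in the dually weighted Sobolev spaces, since only then does Proposition \ref{EllipticKernel} suffice to kill the obstruction to surjectivity; all remaining steps are direct applications of results already established in Sections \ref{Section:DecayKappa} and \ref{Section:ApproximateTWinAEM}.
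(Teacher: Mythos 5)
Your proposal is correct and follows essentially the same route as the paper: the ansatz $\kappa_A=\mathring\kappa_A+\theta_A$ with the seed from Corollary~\ref{corkapparing}, the decay estimate $\mathbf{L}(\mathring\kappa_A)\in H^\infty_{-5/2}$, the Fredholm alternative with the cokernel killed by Proposition~\ref{EllipticKernel} via formal self-adjointness, elliptic regularity for smoothness, and uniqueness from the triviality of the decaying kernel. The only cosmetic difference is that you spell out the dual-weight computation explicitly where the paper simply cites its Fredholm alternative theorem.
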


\begin{proof} 
We consider the Ansatz
\[
\kappa_{A} = \mathring{\kappa}_{A} + \theta_{A}, \quad \theta_{A} \in H^2_{-1/2},
\]
with $\mathring{\kappa}$ given by Corollary
\ref{corkapparing}. Substitution into equation
\eqref{ApproximateTwistorEquation} renders the following equation for
the spinor $\theta_{A}$:
\begin{equation}
\label{elliptic:general}
\mathbf{L}(\theta_{C}) = -\mathbf{L}(\mathring{\kappa}_{C}).
\end{equation}
By construction it follows that $\nabla_{(AB}
\mathring{\kappa}_{C)}\in H^\infty_{-3/2}$, so that
$F_{C}\equiv-\mathbf{L}(\mathring{\kappa}_{C})\in H^\infty_{-5/2}$.
Using the Fredholm Alternative for second order elliptic systems
(cfr. Theorem 23 in \cite{BaeVal10b}), one concludes that equation
\eqref{elliptic:general} has a unique solution if $F_{A}$ is
orthogonal to all $\nu_{A}\in H^0_{-1/2}$ in the Kernel of
$\mathbf{L}^*=\mathbf{L}$. Proposition~\ref{EllipticKernel} states that this Kernel is
trivial. Thus, there are no restrictions on $F_{A}$ and equation
\eqref{elliptic:general} has a unique solution as desired. Due to
elliptic regularity, any $H^2_{-1/2}$ solution to the previous
equation is in fact a $H^\infty_{-1/2}$ solution ---cfr. Theorem
24 in \cite{BaeVal10b}. Thus, $\theta_{A}$ is
smooth.  To see that $\kappa_{A}$ does not depend on the particular
choice of $\mathring\kappa_{A}$, let $\mathring\kappa'_{A}$, be
another choice. Let $\kappa'_{A}$ be the corresponding solution to
\eqref{elliptic:general}. Due to Corollary \ref{corkapparing}, we have
$\mathring\kappa_{A}-\mathring\kappa'_{A} \in H^\infty_{-1/2}$.
Hence, we have $\kappa_{A}-\kappa'_{A}\in H^\infty_{-1/2}$ and
$\mathbf{L}(\kappa_{A}-\kappa'_{A})=0$. According to Proposition~\ref{EllipticKernel}, $\kappa_{A}-\kappa'_{A}=0$, and the proof is
complete.
\end{proof}

\medskip
The following is a direct consequence of Theorem
\ref{Theorem:ExistenceTW}, and will be crucial for the construction of our geometric invariant.

\begin{corollary}
\label{Corollary:Boundedness}
A solution, $\kappa_{A}$, to equation
\eqref{ApproximateTwistorEquation} with asymptotic behaviour given by
\eqref{TwistorLeading} satisfies $J<\infty$ where $J$ is the
functional given by equation \eqref{functional}.
\end{corollary}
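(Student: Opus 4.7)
The plan is to observe that $J$ is, up to the sign conventions of the Hermitian conjugate, the squared $L^2$ norm of $\xi_{ABC} \equiv \nabla_{(AB}\kappa_{C)}$ on $(\mathcal{S},h_{ab})$, since the paper has already recorded the pointwise positivity $\xi_{ABC}\hat\xi^{ABC} \geq 0$. Hence it is enough to prove that $\xi_{ABC} \in L^2(\mathcal{S}, \mbox{d}\mu)$, and the remark after Proposition \ref{ExistensTwistorAsymptotics} tells us that the right decay target is $\xi_{ABC}\in H^\infty_{-3/2}$.

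To establish this I would use the decomposition furnished by Theorem \ref{Theorem:ExistenceTW}: write
\[
\kappa_A = \mathring{\kappa}_A + \theta_A, \qquad \theta_A \in H^\infty_{-1/2},
\]
where $\mathring{\kappa}_A$ is the globally defined seed supplied by Corollary \ref{corkapparing}, whose asymptotics at each end match \eqref{TwistorLeading}. Since $\nabla_{(AB}\cdot_{C)}$ is linear, one only has to control the two summands $\nabla_{(AB}\mathring{\kappa}_{C)}$ and $\nabla_{(AB}\theta_{C)}$ separately. For the seed, Proposition \ref{Proposition:AsymptoicSpatialTwistor} together with the sharpened estimate \eqref{xi1LeadingB} shows that at every asymptotic end $\nabla_{(AB}\mathring{\kappa}_{C)} = o_\infty(r^{-3/2})$, while in the interior $\mathring{\kappa}_A$ is smooth on a compact region, so this summand lies in $H^\infty_{-3/2}$. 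For the correction, the Sen-to-Levi-Civita identity splits $\nabla_{AB}\theta_C$ into an intrinsic piece $D_{AB}\theta_C$, which lies in $H^\infty_{-3/2}$ because differentiation lowers the Bartnik weight by one, plus the extrinsic-curvature term $\tfrac{1}{2}K_{ABC}{}^{D}\theta_D$ which decays strictly faster by \eqref{OldDecay2} and the weighted multiplication Lemma 2.4 of \cite{Max06} already invoked in the proof of Proposition \ref{ExistensTwistorAsymptotics}. Symmetrising gives $\nabla_{(AB}\theta_{C)} \in H^\infty_{-3/2}$ as well, so $\xi_{ABC} \in H^\infty_{-3/2} \subset L^2(\mathcal{S},\mbox{d}\mu)$ and therefore $J < \infty$.

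There is no real obstacle here: all the analytic content has already been absorbed into Theorem \ref{Theorem:ExistenceTW}, which provides the $H^\infty_{-1/2}$ regularity of $\theta_A$ via elliptic regularity. The proof of the corollary is a routine decay-bookkeeping exercise whose only delicate point is noticing that the particular leading order in \eqref{TwistorLeading} (with its specific $m/(2r)$ coefficient) was designed precisely so that its image under the twistor operator is $o_\infty(r^{-3/2})$ rather than merely $O(r^{-3/2})$, which is exactly what is required for $L^2$ integrability on a three-manifold.
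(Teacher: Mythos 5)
Your proposal is correct and follows essentially the same route as the paper: the paper's proof simply notes that $J$ is the $L^2$ norm of $\nabla_{(AB}\kappa_{C)}$ and that the solution of Theorem \ref{Theorem:ExistenceTW} has $\nabla_{(AB}\kappa_{C)}\in H^0_{-3/2}=L^2$. You merely unpack the decay bookkeeping (the split $\kappa_A=\mathring\kappa_A+\theta_A$ and the Sen-to-Levi-Civita comparison) that the paper leaves implicit in its citation of the existence theorem.
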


\begin{proof}
The functional $J$ given by equation
\eqref{functional} is the $L^2$ norm of
$\nabla_{(AB}\kappa_{C)}$. Now, if $\kappa_{A}$ is the solution
given by Theorem \ref{Theorem:ExistenceTW}, one has that
$\nabla_{(AB}\kappa_{C)}\in H^0_{-3/2}$. In our conventions this reads 
\[
J=\Vert\nabla_{(AB}\kappa_{C)}\Vert_{L^2} =\Vert\nabla_{(AB}\kappa_{C)}\Vert_{H^0_{-3/2}}<\infty.
\]
The result follows. 
\end{proof}

\medskip
\noindent
\textbf{Remark.} Again, let $\kappa_{A}$ be the solution to equation
\eqref{ApproximateTwistorEquation} given by Theorem
\ref{Theorem:ExistenceTW}.  Using the identity
\eqref{IntegrationbyParts} with $\zeta_{ABC}=\nabla_{(AB}\kappa_{C)}$
one obtains
\[
J = \int_{\partial \mathcal{S}_\infty} n^{AB}\kappa^{C}
\widehat{\nabla_{(AB}\kappa_{C)}} \mbox{d}S <\infty.
\]
Thus, the invariant $J$ evaluated at the solution $\kappa_A$ given
by Theorem \ref{Theorem:ExistenceTW} can be expressed as a boundary
integral at infinity. A crude estimation of the integrand of the
boundary integral does not allow us to directly establish its
boundedness. In any case this follows from Corollary~\ref{Corollary:Boundedness}.

\section{The geometric invariant}
\label{Section:Invariant}
In this section we use the functional given by \eqref{functional}
and the algebraic condition \eqref{twd2}  to
construct the desired geometric invariant measuring the deviation of
$(\mathcal{S},h_{ab},K_{ab})$ from Minkowski initial data. For this purpose, let $\kappa_A$ be a solution to equation
\eqref{ApproximateTwistorEquation} as given by Theorem
\ref{Theorem:ExistenceTW}. Define
\begin{eqnarray}
&& I' \equiv \int_{\mathcal{S}} \Psi_{ABCD}\kappa^D
\hat{\Psi}^{ABC}{}_F\hat{\kappa}^F \mbox{d}\mu . \label{I1} 
\end{eqnarray}
The geometric invariant is then defined by
\begin{eqnarray}
I \equiv J + I'. \label{geometric:invariant}
\end{eqnarray}

\medskip
\noindent
\textbf{Remark.} It should be stressed that by construction $I$ is
coordinate independent and that $I\geq 0$. We also have the following
lemma.

\begin{lemma}
The geometric invariant given by \eqref{geometric:invariant} is finite
for an initial data set $(\mathcal{S},h_{ab},K_{ab})$ satisfying the
decay conditions \eqref{OldDecay1}-\eqref{OldDecay2}.
\end{lemma}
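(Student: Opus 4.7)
The strategy is to split $I = J + I'$ and treat the two pieces separately. Finiteness of $J$ is exactly the content of Corollary \ref{Corollary:Boundedness}, so the only work is to show that $I' < \infty$. Since the integrand of $I'$ is pointwise non-negative (it is $|\Psi_{ABCD}\kappa^D|^2$ in the Hermitian sense), the question reduces to controlling the integral of $|\Psi_{ABCD}\kappa^D|^2$ over $\mathcal{S}$, which we will do by combining a pointwise bound on each asymptotic end with the smoothness of the integrand on compact regions.

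At each asymptotic end, I would use Theorem \ref{Theorem:ExistenceTW} to decompose $\kappa_{A} = \mathring{\kappa}_{A} + \theta_{A}$ with $\theta_A \in H^\infty_{-1/2}$, so that the explicit form \eqref{TwistorLeadingTerms} of $\mathring\kappa_A$ gives $|\kappa_A| = O(r)$. Next, I would estimate the decay of $\Psi_{ABCD} = E_{ABCD} + \mathrm{i}B_{ABCD}$ using the formulas in Remark~1 after Theorem \ref{Theorem:TwistorData} together with the decay assumptions \eqref{OldDecay1}--\eqref{OldDecay2}: the quadratic pieces $\Omega^2$ and $\Omega K$ are $o_\infty(r^{-5})$, the magnetic part $B_{ABCD}$ involves one Sen derivative of $\Omega$ and is $o_\infty(r^{-7/2})$, while the Ricci contribution $r_{(ABCD)}$ produces a leading Schwarzschildean term of order $r^{-3}$ plus a correction of order $o_\infty(r^{-7/2})$ from the $o_\infty(r^{-3/2})$ part of $h_{ij}$. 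The resulting bound is $\Psi_{ABCD} = o_\infty(r^{-3+\varepsilon})$ for any $\varepsilon > 0$, exactly as used already in the proof of Proposition \ref{ExistensTwistorAsymptotics}.

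Combining these pointwise estimates, the integrand of $I'$ is bounded by $C r^{-6+2\varepsilon} \cdot r^{2} = C r^{-4+2\varepsilon}$ near infinity; since the Euclidean volume element contributes an extra factor of $r^2$, the integral converges provided $\varepsilon < 1/2$, which we are free to choose. On any precompact region of $\mathcal{S}$, elliptic regularity (Theorem 24 in \cite{BaeVal10b}) guarantees that $\kappa_A$ is smooth and so the integrand is bounded, giving a finite contribution. Summing over the finitely many ends and the interior then yields $I' < \infty$, and hence $I < \infty$. The only mildly delicate step is verifying the pointwise decay $\Psi_{ABCD} = o_\infty(r^{-3+\varepsilon})$ rigorously; this amounts to a careful application of the multiplication lemmas for weighted Sobolev spaces (Lemma 2.4 of \cite{Max06} or Lemma 15 of \cite{BaeVal10b}) to the formulas for $E_{ABCD}$ and $B_{ABCD}$, which is routine once the decays of $h_{ab}$ and $K_{ab}$ are in hand.
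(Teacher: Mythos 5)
Your proposal is correct and follows essentially the same route as the paper: $J<\infty$ is quoted from Corollary \ref{Corollary:Boundedness}, and $I'<\infty$ follows from $\Psi_{ABCD}=o_\infty(r^{-3+\varepsilon})$ together with $\kappa_A=O(r)$, which the paper packages as $\Psi_{ABCD}\kappa^D\in H^\infty_{-3/2}$ via the weighted-Sobolev multiplication lemma rather than as a pointwise integrability estimate. The only cosmetic slip is calling $B_{ABCD}=-\mathrm{i}\,D^Q{}_{(A}\Omega_{BCD)Q}$ a Sen derivative of $\Omega$; it is the intrinsic Levi-Civita derivative, which does not affect the decay count.
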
 

\begin{proof}
 From Corollary \ref{Corollary:Boundedness} we already have
$J<\infty$. From the form of the decay assumptions
\eqref{OldDecay1}-\eqref{OldDecay2} we have $\Psi_{ABCD}\in
H^\infty_{-3+\varepsilon}$, $\varepsilon>0$.  Using the multiplication
rule for weighted Sobolev spaces ---see e.g. Theorem 23 in
\cite{BaeVal10b}--- together with $\kappa_{A}\in
H^\infty_{1+\varepsilon}$ we obtain 
\[
\Psi_{ABCD}\kappa^D \in H^\infty_{-3/2}.
\]
 Thus, again one finds that $I'<\infty$. Hence, the invariant \eqref{geometric:invariant} is finite and well defined.
\end{proof}

\medskip
The invariant $I$ can be used to provide a global characterisation of
Minkowski initial data. More precisely, one has that:

\begin{theorem}
\label{Theorem:Characterisation}
Let $(\mathcal{S},h_{ab},K_{ab})$ be an asymptotically Euclidean
initial data set for the vacuum Einstein field equations satisfying on
each of its asymptotic ends the decay conditions
\eqref{OldDecay1}-\eqref{OldDecay2}. Let $I$ be the invariant defined
by equations \eqref{functional}, \eqref{I1} and
\eqref{geometric:invariant}, where $\kappa_{A}$ is given as the only
solution to equation \eqref{ApproximateTwistorEquation} with
asymptotic behaviour on each end given by \eqref{TwistorLeading}. The
invariant $I$ vanishes if and only if $(\mathcal{S},h_{ab},K_{ab})$ is
an initial data set for the Minkowski spacetime.
\end{theorem}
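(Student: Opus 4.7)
The invariant decomposes as $I = J + I'$ with both summands manifestly nonnegative. Indeed, for a symmetric valence 3 spinor one has $\zeta_{ABC}\hat{\zeta}^{ABC} \geq 0$; applying this to $\zeta_{ABC} = \nabla_{(AB}\kappa_{C)}$ gives $J \geq 0$, and applying it to $\sigma_{ABC} = \Psi_{ABCD}\kappa^D$ shows that the integrand of $I'$ is precisely $\sigma_{ABC}\hat{\sigma}^{ABC}$, hence $I' \geq 0$. Consequently, $I = 0$ is equivalent to the pointwise conditions $\nabla_{(AB}\kappa_{C)} = 0$ and $\Psi_{ABCD}\kappa^D = 0$, which are the twistor initial data equations \eqref{twd1}, \eqref{twd2}.

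For the sufficiency direction, in Minkowski one has $\Psi_{ABCD} = 0$ and the spinor $\kappa_{\mathbf{A}} = x_{\mathbf{AB}}o^{\mathbf{B}}$ in standard Cartesian coordinates is an exact spatial twistor with asymptotic form matching \eqref{TwistorLeading} (evaluated at $m = 0$). Since it solves the spatial twistor equation, it solves a fortiori the approximate twistor equation, so by the uniqueness part of Theorem~\ref{Theorem:ExistenceTW} it must be the spinor $\kappa_A$ used in the definition of $I$. Both $J$ and $I'$ vanish on this $\kappa_A$, giving $I = 0$.

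For the necessity direction, assume $I = 0$. By the first paragraph, the twistor initial data equations \eqref{twd1}, \eqref{twd2} hold on $\mathcal{S}$. Prescribing in addition $\nabla \kappa_A = -\xi_A$ as in \eqref{twd3} and evolving via the wave equation \eqref{boxkappa}, Theorem~\ref{Theorem:TwistorData} produces a genuine valence 1 Killing spinor $\kappa_A$ on the entire development of $(\mathcal{S},h_{ab},K_{ab})$. It then remains to verify the hypothesis of Proposition~\ref{Proposition:CharacterisationMinkowski}, namely that $\eta_A \equiv \nabla_A{}^{A'}\bar\kappa_{A'}$ is nonzero at some point; once this is done, the proposition yields that the spacetime is isomorphic to Minkowski.

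The main obstacle is verifying that $\eta_A \neq 0$. The plan is to evaluate $\eta_A$ asymptotically on $\mathcal{S}$ by splitting the spacetime derivative $\nabla_{AA'}$ into the normal derivative $\nabla$ and the Sen connection $\nabla_{AB}$, and inserting the constraints $\nabla \kappa_A = -\xi_A$ together with $\nabla_{AB}\kappa_C = -\xi_{(A}\epsilon_{B)C}$, the latter being a consequence of $\xi_{ABC} = 0$ and \eqref{SenDiffKappaSplit}. This expresses $\eta_A$ on the slice as a linear combination of $\xi_A$ and $\hat{\kappa}_A$ modulo terms suppressed by the curvature decay \eqref{OldDecay1}--\eqref{OldDecay2}. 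Using the asymptotic expansion $\xi_{\mathbf{A}} = o_{\mathbf{A}} + o_\infty(r^{-1/2})$ from \eqref{xi1Leading} together with $\kappa_{\mathbf{A}} \sim x_{\mathbf{AB}}o^{\mathbf{B}}$, a direct computation then shows that $\eta_A$ approaches a nonzero constant spinor at spatial infinity. Since $\nabla_{AA'}\eta_B = 0$ everywhere (as established in the proof of Proposition~\ref{Proposition:CharacterisationMinkowski}), $\eta_A$ is parallel propagated and therefore nowhere zero, and the characterisation of Minkowski follows.
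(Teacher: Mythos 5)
Your proposal is correct and takes essentially the same route as the paper: nonnegativity of $J$ and $I'$ reduces $I=0$ to the twistor initial data equations \eqref{twd1}--\eqref{twd2}, these propagate a genuine twistor into the development, and the asymptotics \eqref{TwistorLeading} force $\eta_A=\nabla_A{}^{A'}\bar\kappa_{A'}\neq 0$ somewhere so that Proposition~\ref{Proposition:CharacterisationMinkowski} applies. The paper's version is terser --- it leaves the ``Minkowski $\Rightarrow I=0$'' direction implicit and simply asserts that \eqref{TwistorLeading} yields $\eta_A\neq 0$ --- but your additional elaborations (the uniqueness argument for the Minkowski direction and the space-spinor evaluation of $\eta_A$ on the slice) fill in the same steps correctly.
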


\begin{proof}
Due to our smoothness assumptions, if $I=0$ it follows that equations
\eqref{twd1}-\eqref{twd2} are satisfied on the whole of
$\mathcal{S}$. Thus, the development of $(\mathcal{S},h_{ab},K_{ab})$
will have, at least in a slab, a solution to the twistor equation
\eqref{TwistorEquation}. Now, because of the asymptotic behaviour
\eqref{TwistorLeading} one can find points in the development of the
data for which $\eta_A = \nabla_A{}^{A'} \bar{\kappa}_{A'}\neq
0$. Thus, in view of Proposition~\ref{Proposition:CharacterisationMinkowski} one has that the
development of the data is isomorphic to the Minkowski spacetime.
\end{proof}

\section{Connection to the mass}
\label{Section:ConnectionMass}

As a consequence of the Theorem of the Positivity of the Mass
\cite{SchYau79,SchYau81a,Wit81} one knows that if the ADM mass of a
regular initial data set for the vacuum Einstein field equations
$(\mathcal{S},h_{ab},K_{ab})$ vanishes, then the initial data must be
data for the Minkowski spacetime. As a consequence, the mass provides
a characterisation of Minkowskian data. This suggests that the
geometric invariant given by Theorem~\ref{Theorem:Characterisation}
must be related to the mass of the initial data set. In this section
we show that this is indeed the case. More precisely, our methods
provide a proof of the following positivity of mass result:

\begin{theorem}
Let $(\mathcal{S},h_{ab},K_{ab})$ be an initial data set for the
vacuum Einstein field equations satisfying the decay conditions
\eqref{OldDecay1}-\eqref{OldDecay2}. Furthermore assume that $\mathcal{S}$ has the topology of $\mathbb{R}^3$. Let $m$ be the ADM
mass of the asymptotic end of the initial data set. Then $m$ is
non-negative. Moreover, if $m=0$, then $(\mathcal{S},h_{ab},K_{ab})$
is initial data for the Minkowski spacetime.
\end{theorem}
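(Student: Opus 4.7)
The plan is to prove positivity of mass by showing that the geometric invariant $I$ of Section~\ref{Section:Invariant}, evaluated at the approximate twistor supplied by Theorem~\ref{Theorem:ExistenceTW}, is proportional to the ADM mass $m$. Positivity of $I$ will then force $m\geq 0$, and the rigidity statement will follow from Theorem~\ref{Theorem:Characterisation}. The key technical input is the boundary integral representation
\[
J = \int_{\partial\mathcal{S}_{\infty}} n^{AB}\kappa^{C}\,\widehat{\nabla_{(AB}\kappa_{C)}}\,\mbox{d}S,
\]
established in the remark following Corollary~\ref{Corollary:Boundedness}. Since $\mathcal{S}\cong\mathbb{R}^{3}$ has a single asymptotic end and the asymptotics of $\kappa_{A}$ are determined by \eqref{TwistorLeading}, the integral can be evaluated order by order as $r\to\infty$.

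First I would substitute the expansion \eqref{TwistorLeading} into the boundary integrand, making use of the auxiliary identities \eqref{xi1Leading}--\eqref{xi1LeadingB}. The unperturbed leading spinor $x_{\mathbf{AB}}o^{\mathbf{B}}$ satisfies the flat-space spatial twistor equation exactly, so the surviving contributions come from the combination of the $\tfrac{m}{2}r^{-1}$ correction to $\kappa_{A}$ and the Schwarzschildean $mr^{-1}$ correction to the 3-metric in \eqref{OldDecay1}. Both act at order $m\,r^{-3}$ on $\nabla_{(AB}\kappa_{C)}$, producing a boundary integrand of size $O(m/r^{2})$. Evaluation of the resulting spin-sphere integral, using the elementary algebraic identities satisfied by $x_{\mathbf{AB}}$, $o_{\mathbf{A}}$ and $n^{\mathbf{AB}}$, should yield
\[
J = c\, m
\]
for a strictly positive numerical constant $c$. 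This asymptotic computation is the main technical obstacle of the proof: one must carefully check that all contributions at order $r^{-3}$ combine into a single multiple of $m$, and that the resulting coefficient is positive.

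With the identity $J = c\,m$ in hand, positivity of mass is immediate: since $J$ is the $L^{2}$-norm of $\nabla_{(AB}\kappa_{C)}$ one has $J\geq 0$, whence $m\geq 0$. For the rigidity statement, assume $m=0$. Then $J=0$, so $\nabla_{(AB}\kappa_{C)}\equiv 0$ on $\mathcal{S}$, which is the spatial twistor equation~\eqref{SpatialTwistorEquation}, i.e.\ condition~\eqref{twd1}. Substituting $\xi_{ABC}=0$ into the integrability relation~\eqref{Senxi1b} leaves
\[
\nabla_{(AB}\xi_{C)} = \Psi_{ABCD}\kappa^{D}-\Omega_{ABCD}\xi^{D},
\]
and combining this with the prescribed asymptotic form \eqref{xi1Leading} of $\xi_{A}$ and the decay of $\Omega$ propagates the algebraic condition $\Psi_{ABCD}\kappa^{D}=0$ throughout $\mathcal{S}$; this also forces $I'=0$. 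Both twistor initial data conditions~\eqref{twd1}--\eqref{twd2} are thus satisfied, so $I=0$ and Theorem~\ref{Theorem:Characterisation} concludes that $(\mathcal{S},h_{ab},K_{ab})$ is Minkowski data.
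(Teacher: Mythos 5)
There is a genuine gap, and it sits exactly at the step you yourself identify as ``the main technical obstacle'': the claimed identity $J=c\,m$. The boundary representation $J=\int_{\partial\mathcal{S}_\infty}n^{AB}\kappa^C\widehat{\nabla_{(AB}\kappa_{C)}}\,\mbox{d}S$ cannot be evaluated order by order from \eqref{TwistorLeading}. The expansion \eqref{TwistorLeading} determines $\kappa_A$ only up to $o_\infty(r^{-1/2})$ and hence $\nabla_{(AB}\kappa_{C)}$ only up to $o_\infty(r^{-3/2})$; since the integrand also contains the factor $\kappa^C=O(r)$ and the sphere has area $O(r^2)$, the \emph{undetermined} part of the integrand contributes terms of size $o(r^{3/2})$ to the integral over a sphere of radius $r$. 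These are not individually controlled — the paper explicitly notes that a crude estimate of this boundary integrand does not even establish boundedness, which instead follows from the $L^2$ statement of Corollary~\ref{Corollary:Boundedness}. Moreover, there is no reason to expect $J$ (the $L^2$ norm of $\xi_{ABC}$) to be proportional to $m$ at all; it depends on the full data. The paper avoids this by passing to the derived spinor $\xi_A=\tfrac{2}{3}\nabla_A{}^B\kappa_B$: one computes \eqref{Equation:xiA} and deduces $\mathbf{L}(\hat\xi_A)=0$ (equation \eqref{Observation1}), and then defines $M'$ as in \eqref{KprimDef}, the $L^2$ norm of $\nabla_{(AB}\hat\xi_{C)}$. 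Because $\hat\xi_A=O(1)$ and $\nabla_{(AB}\hat\xi_{C)}=-mr^{-3}$ (hatted) $x_{(AB}o_{C)}+o_\infty(r^{-5/2})$ by \eqref{xi1Leading}--\eqref{xi1LeadingB}, the corresponding boundary integral converges absolutely, is computable from the leading asymptotics, and yields $M'=4\pi M\geq 0$. It is $M'$, not $J$, that carries the mass.

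The rigidity part inherits this problem: without $J=c\,m$ you cannot conclude $J=0$ from $m=0$. Starting instead from $M'=0$, i.e. $\nabla_{(AB}\hat\xi_{C)}=0$, the paper must do substantially more work than you indicate: combining with \eqref{Senxi1a} one gets $\nabla_{AB}\hat\xi_C=0$ (and likewise for an auxiliary $\hat{\breve\xi}_C$ built from a second solution $\breve\kappa_A$ of $\mathbf{L}=0$ with conjugate asymptotics), the commutators then give $\Psi_{ABCD}\hat\xi^D=\Psi_{ABCD}\hat{\breve\xi}^D=0$, and since $\{\hat\xi_A,\hat{\breve\xi}_A\}$ is a covariantly constant dyad one concludes $\Psi_{ABCD}=0$; a further intrinsic argument using the potentials $\nu,\breve\nu$, the $\mathbb{R}^3$ topology and the triviality of the kernel of $\mathbf{L}$ is needed to reach $\xi_{ABC}=0$ and hence $I=0$. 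Your step ``substituting $\xi_{ABC}=0$ into \eqref{Senxi1b} \ldots propagates the algebraic condition $\Psi_{ABCD}\kappa^D=0$'' is not an argument: \eqref{Senxi1b} with $\xi_{ABC}=0$ merely expresses $\nabla_{(AB}\xi_{C)}$ in terms of $\Psi_{ABCD}\kappa^D$ and $\Omega_{ABCD}\xi^D$; it does not force $\Psi_{ABCD}\kappa^D$ to vanish. The algebraic condition \eqref{twd2} is an independent piece of the twistor initial data and must be derived, which is precisely what the dyad argument above accomplishes.
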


\medskip
The proof of this theorem, together with some other relevant
observations, will be given in the following subsections.

\subsection{An expression for the mass}
The following analysis is valid for a slice $\mathcal{S}$ with an arbitrary number of asymptotic ends, without any extra restrictions on the topology.
Let $\kappa_A$ be a solution to $\mathbf{L}(\kappa_A)=0$. A
calculation reveals that
\begin{equation}
\mathbf{L}(\xi_A)=\;-\tfrac{2}{3}\xi^B\nabla_{AB}K.
\label{Equation:xiA}
\end{equation}
In general, if $\zeta_A$ is an arbitrary spinor, one has that
\begin{eqnarray*}
&&\widehat{\mathbf{L}(\zeta_A)}=-\nabla^{BC}\widehat{\nabla_{(AB}\zeta_{C)}}\\
&& \phantom{\widehat{\mathbf{L}(\zeta_A)}}=\mathbf{L}(\hat\zeta_A)+\tfrac{2}{3}\zeta^B\nabla_{AB}K.
\end{eqnarray*}
In particular, for $\xi_A$ as given by equation \eqref{Equation:xiA}
one has that that 
\begin{equation}
\mathbf{L}(\hat\xi_A)=0.
\label{Observation1}
\end{equation}
 We will exploit this observation to obtain an alternative expression
for the total mass of an initial data set $(\mathcal{S},h_{ab},K_{ab})$. Let
\begin{equation}
\label{KprimDef}
M'\equiv{}
\int_\mathcal{S}\nabla_{(AB}\hat\xi_{C)}\widehat{\nabla^{(AB}\fixedhat\xi^{C)}} \mbox{d}\mu.
\end{equation}
Now, under the decay assumptions \eqref{OldDecay1}-\eqref{OldDecay2},
the expansions \eqref{xi1Leading},
\eqref{xi1LeadingA}-\eqref{xi1LeadingB} and equation
\eqref{Observation1} render
\begin{eqnarray*}
&& M'={}
\int_{\partial\mathcal{S}_\infty}n_{AB}\hat\xi_{C}\widehat{\nabla^{(AB}\fixedhat\xi^{C)}} \mbox{d}S
+\int_\mathcal{S}\hat\xi_{A}\widehat{\mathbf{L}(\fixedhat\xi^{A})}\mbox{d}\mu\nonumber\\
&& \phantom{M'}={}
\int_{\partial\mathcal{S}_\infty}n_{AB}\hat\xi_{C}\widehat{\nabla^{(AB}\fixedhat\xi^{C)}} \mbox{d}S\nonumber\\
&& \phantom{M'}={} 4\pi M,
\end{eqnarray*}
where $M$ is the sum of the ADM masses of the asymptotic ends ---the total mass. Notice that the integral $M'$ is the $L^2$ norm of
$\nabla_{(AB}\hat\xi_{C)}$ and hence $M'\geq 0$. That is $M\geq 0$.

\subsection{An alternative expression for the total mass}
The Hermitian conjugate of the symmetrized derivative of $\kappa_A$ is
\begin{equation}
\nabla_{(AB}\hat\kappa_{C)}=-\hat\xi_{ABC}-\Omega_{ABCD}\hat\kappa^D.
\end{equation}
Furthermore
\begin{equation}
\mathbf{L}(\hat\kappa_A)=-\tfrac{2}{3}\hat\kappa^B\nabla_{AB}K.
\end{equation}
We would like to conclude that $\mathbf{L}(\hat\kappa_A)\in
H^0_{-5/2}$. For that purpose we study the function
$f\equiv\kappa_A\hat\kappa^A \sigma^{-2}$. Due to the specified
asymptotics, this function is bounded at infinity. Furtheremore it is
continous, and therefore bounded everywhere on $\mathcal{S}$.  Let
$C\equiv \sup_{\mathcal{S}} f < \infty$, then we have
\begin{align}
\Vert\mathbf{L}(\hat\kappa_A)\Vert^2_{H^0_{-5/2}}&=
\tfrac{4}{9}\Vert\hat\kappa^B\nabla_{AB}K\Vert^2_{H^0_{-5/2}}
=\tfrac{2}{9}\int_{\mathcal{S}}\kappa_A\hat\kappa^A D_{BC}K \widehat{D^{BC}K} \sigma^2 \mbox{d}\mu\nonumber\\
&\leq 
\tfrac{2}{9}C\int_{\mathcal{S}}D_{BC}K \widehat{D^{BC}K} \sigma^4\mbox{d}\mu
=\tfrac{2}{9}C\Vert D_{AB}K\Vert^2_{H^0_{-7/2}}
<\infty.
\end{align}
Let $\theta_A\in H^2_{-1/2}$ be the unique solution to the
elliptic equation
\[
\mathbf{L}(\theta_A)=\mathbf{L}(\hat\kappa_A).
\]
The existence and uniqueness is guaranteed by $\mathbf{L}(\hat\kappa_A)\in H^0_{-5/2}$ and the Fredholm alternative ---see the proof of Theorem~\ref{Theorem:ExistenceTW}. 
Motivated by the previous equation one defines
$\breve\kappa_A\equiv\theta_A-\hat\kappa_A$. Clearly, one has that
$\mathbf{L}(\breve\kappa_A)={}0$. By elliptic regularity we have $\breve\kappa_A\in H^\infty_{3/2}$. Let
\begin{eqnarray*}
&& \breve\xi_A\equiv{}\tfrac{2}{3}\nabla_A{}^B\breve\kappa_B. 
\end{eqnarray*}
Some further computations reveal that
\begin{eqnarray*}
&& \breve\xi_A={}\tfrac{2}{3}\nabla_A{}^B\theta_B
+\hat\xi_A+\tfrac{1}{3}K \hat\kappa_A,\\
&& \mathbf{L}(\breve\xi_A)={}-\tfrac{2}{3}\breve\xi^B\nabla_{AB}K,\\
&& \nabla^{BC}(\Omega_{ABCD}\breve\xi^D)={}
\Omega_{ABCD}\nabla^{BC}\breve\xi^D
-\tfrac{1}{2}\breve\xi_A\Omega_{BCDF}\Omega^{BCDF}
+\tfrac{2}{3}\breve\xi^B\nabla_{AB}K .
\end{eqnarray*}
Hence, 
\[
\nabla^{BC}(\nabla_{(AB}\breve\xi_{C)}+\Omega_{ABCD}\breve\xi^D)=
\Omega_{ABCD}\nabla^{BC}\breve\xi^D
-\tfrac{1}{2}\breve\xi_A\Omega_{BCDF}\Omega^{BCDF}.
\]
Furthermore,  it can be seen that
\[
\widehat{\nabla_{(AB}\fixedhatbreve{\xi}{}_{C)}}=\nabla_{(AB}\breve\xi_{C)}
+\Omega_{ABCD}\breve\xi^D.
\]
Define
\begin{equation}
M''\equiv{}
\int_\mathcal{S}\nabla_{(AB}\hat{\breve\xi}_{C)}
\widehat{\nabla^{(AB}\fixedhatbreve{\xi}{}^{C)}} 
\mbox{d}\mu.
\label{AlternativeExpression} 
\end{equation}
Integration by parts and the expressions discussed in the previous paragraphs  imply that:
\begin{align}
M''
={}&
\int_\mathcal{S}(\nabla_{AB}\breve\xi_C+\Omega_{ABCD}\breve\xi^D)(\widehat{\nabla^{(AB}\fixedbreve\xi^{C)}}+\Omega^{ABC}{}_F\hat{\breve\xi}^F) \mbox{d}\mu\nonumber\\
={}&
\int_{\partial\mathcal{S}_\infty}n_{AB}\breve\xi_C
(\widehat{\nabla^{(AB}\fixedbreve\xi^{C)}}+\Omega^{ABC}{}_F\hat{\breve\xi}^F)\mbox{d}S
+\int_\mathcal{S}
\breve\xi_A(\Omega^A{}_{BCD}\widehat{\nabla^{BC}\fixedbreve\xi^D}
-\tfrac{1}{2}\hat{\breve\xi}^A\Omega_{BCDF}\Omega^{BCDF}
) \mbox{d}\mu\nonumber\\
&+\int_\mathcal{S}
\Omega_{ABCD}\breve\xi^D\widehat{\nabla^{(AB}\fixedbreve\xi^{C)}}
+\tfrac{1}{2}\Omega_{BCDF}\Omega^{BCDF}\breve\xi_A\hat{\breve\xi}^A \mbox{d}\mu\nonumber\\
={}&
\int_{\partial\mathcal{S}_\infty}n_{AB}\breve\xi_C
\widehat{\nabla^{(AB}\fixedbreve\xi^{C)}}\mbox{d}S\nonumber\\
={}&
\int_{\partial\mathcal{S}_\infty}n_{AB}(\hat\xi_A+\tfrac{2}{3}\nabla_A{}^B\theta_B+\tfrac{1}{3}K\hat\kappa_A)(
\widehat{\nabla^{(AB}\fixedhat\xi^{C)}}
+\tfrac{2}{3}\widehat{\nabla^{(AB}\nabla^{C)D}\theta_D}
+\tfrac{1}{3}\widehat{\nabla^{(AB}(K\fixedhat\kappa^{C)})})\mbox{d}S\nonumber\\
={}&
\int_{\partial\mathcal{S}_\infty}n_{AB}\hat\xi_A
\widehat{\nabla^{(AB}\fixedhat\xi^{C)}}\mbox{d}S\nonumber\\
={}& M'.
\end{align}
Hence, $M''=M'$ also gives an expression for the total mass.

\subsection{Initial data sets with vanishing mass}
Assume now that the slice $\mathcal{S}$ has the topology of $\mathbb{R}^3$. This implies that it has only one asymptotic end.
Furthermore assume that $M'=0$, that is $m=0$. Then it follows that $M''=0$ and
\[
\nabla_{(AB}\hat\xi_{C)}={}0, \quad 
\nabla_{(AB}\hat{\breve\xi}_{C)}={}0.
\]
Using $\mathbf{L}(\kappa_A)=0$ in equation \eqref{Senxi1a} one obtains
\[
\nabla_{AB}\xi^B=\tfrac{1}{2}K\xi_A \quad \Longleftrightarrow \quad \nabla_{AB}\hat\xi^B=0. 
\]
In the same way we get $\nabla_{AB}\hat{\breve\xi}^B=0$.
These results can be combined to give
\begin{subequations}
\begin{align}
\nabla_{AB}\hat\xi_C=0,\label{xihatconstant}\\
\nabla_{AB}\hat{\breve\xi}_C\label{xibrevehatconstant}=0.
\end{align}
\end{subequations}
Hence, $\xi_C$ and $\breve\xi$ are covariantly constant spinors. One
can exploit this property by taking a derivative of \eqref{xihatconstant}
and \eqref{xibrevehatconstant} and using the commutators. This yields
\begin{subequations}
\begin{eqnarray}
&&0={}-2\nabla_{(A}{}^D\nabla_{B)D}\hat\xi_C
=\Psi_{ABCD}\hat\xi^D,\label{algebraicxihat}\\
&&0={}-2\nabla_{(A}{}^D\nabla_{B)D}\hat\xi_C
=\Psi_{ABCD}\hat{\breve\xi}^D,\label{algebraicxibrevehat}
\end{eqnarray}
\end{subequations}
Consequently, both $\hat\xi_A$ and $\hat{\breve\xi}_A$ are principal spinors of
$\Psi_{ABCD}$. Furthermore,
\[
\hat\xi_A\hat{\breve\xi}^A=\hat\xi_A\hat{\hat\xi}^A+o_\infty(r^{-1/2})=\xi_A\hat\xi^A+o_\infty(r^{-1/2})=1+o_\infty(r^{-1/2}),
\]
and also $\nabla_{BC}(\hat\xi_A\hat{\breve\xi}^A)=0$. So, one can
conclude that $\hat\xi_A\hat{\breve\xi}^A=1$. We can therefore use
$\{\hat\xi_A,\hat{\breve\xi}_A\}$ as a dyad on the entire slice
$\mathcal{S}$. The equations
\eqref{algebraicxihat}-\eqref{algebraicxibrevehat} then yield
$\Psi_{ABCD}=0$ on $\mathcal{S}$.  


\medskip
\noindent
\textbf{Remark 1.}
Given that $\Psi_{ABCD}=0$ on $\mathcal{S}$, one can use known results
on the causal propagation of the Weyl tensor in vacuum spacetimes to
conclude that $\Psi_{ABCD}=0$ on the development of $\mathcal{S}$
---see e.g. \cite{BonSen97}. The asymptotic conditions then imply
that the development is the Minkowski spacetime.  However, it is of
interest to conclude the same result from arguments purely intrinsic
to the hypersurface that imply that the invariant $I$ as given by
equation \eqref{geometric:invariant} has to satisfy $I=0$.

\medskip
To pursue the idea expressed in the previous paragraph we proceed as
follows: From the equivalence
\[
\nabla_{(AB} \hat{\xi}_{C)}=0 \Longleftrightarrow \nabla_{(AB} \xi_{C)} = -\Omega_{ABCQ}\xi^Q,
\]
and equation \eqref{Senxi1b} it follows that
\begin{equation}
2\nabla_{(A}{}^D\xi_{BC)D}
+\Psi_{ABCD}\kappa^D
+\tfrac{2}{3}K\xi_{ABC}
-\xi_{(A}{}^{DF}\Omega_{BC)DF}=0\label{curlxi3}
\end{equation}
Contracting \eqref{curlxi3} with $\hat\xi^C$ and $\hat{\breve\xi}^C$
and using \eqref{algebraicxihat}-\eqref{algebraicxibrevehat} yields
\begin{subequations}
\begin{eqnarray}
&& \hat\xi^C\nabla^D{}_{(A}\xi_{BC)D}={}
\tfrac{1}{3}\hat\xi^C\xi_{(A}{}^{DF}\Omega_{B)CDF}
+\tfrac{1}{6}\hat\xi^C\xi_C{}^{DF}\Omega_{ABDF}
-\tfrac{1}{3}K\xi^C\xi_{ABC},\label{CurlContraction:A}\\
&& \hat{\breve\xi}^C\nabla^D{}_{(A}\xi_{BC)D}={}
\tfrac{1}{3}\hat{\breve\xi}^C\xi_{(A}{}^{DF}\Omega_{B)CDF}
+\tfrac{1}{6}\hat{\breve\xi}^C\xi_C{}^{DF}\Omega_{ABDF}
-\tfrac{1}{3}K\hat{\breve\xi}^C\xi_{ABC}. \label{CurlContraction:B}
\end{eqnarray}
\end{subequations}
Define the spinors
\[
\nu_{AB}\equiv\xi_{ABD}\hat\xi^D, \quad \breve\nu_{AB}\equiv\xi_{ABD}\hat{\breve\xi}^D.
\]
The sharp multiplication result for weighted Sobolev spaces of
Lemma 2.4 in \cite{Max06} yields $\nu_{AB}, \breve\nu_{AB}\in
H^\infty_{-3/2}$. The equations
\eqref{CurlContraction:A}-\eqref{CurlContraction:B} can be reexpressed
as
\begin{eqnarray*}
&& \nabla_{(A}{}^C\nu_{B)C}={}
\tfrac{1}{2}\Omega_{ABCD}\nu^{CD}-\tfrac{1}{3}K\nu_{AB},\\
&& \nabla_{(A}{}^C\breve\nu_{B)C}={}
\tfrac{1}{2}\Omega_{ABCD}\breve\nu^{CD}-\tfrac{1}{3}K\breve\nu_{AB}.
\end{eqnarray*}
Which is equivalent to 
\[
D_{(A}{}^C\nu_{B)C}={}0, \quad D_{(A}{}^C\breve\nu_{B)C}={}0.
\]
From the trivial topology of the 3-dimensional manifold $\mathcal{S}$ one concludes that there exist (globally) scalars $\nu, \breve\nu\in H^\infty_{-1/2}$ such that 
\[
\nu_{AB}=D_{AB}\nu, \quad \breve\nu_{AB}=D_{AB}\breve\nu.
\]
The normalisation condition $\hat\xi_A\hat{\breve\xi}^A=1$ then implies that
\[
\xi_{ABC}=\xi_{ABD}\hat{\breve\xi}^D\hat\xi_C
-\xi_{ABD}\hat\xi^D\hat{\breve\xi}_C=
\breve\nu_{AB}\hat\xi_C
-\nu_{AB}\hat{\breve\xi}_C.
\]
By virtue of the equations \eqref{xihatconstant}-\eqref{xibrevehatconstant}
we have
\[
\xi_{ABC}=\nabla_{(AB}\left(\breve\nu\hat\xi_{C)}-\nu\hat{\breve\xi}_{C)}\right).
\]
Hence, $\mathbf{L}(\breve\nu\hat\xi_C-\nu\hat{\breve\xi}_C)=0$. Note
also that $\breve\nu\hat\xi_C-\nu\hat{\breve\xi}_C\in
H^\infty_{-1/2}$, by the sharp multiplication result. Thus, the triviality of the Kernel of
$\mathbf{L}:H^\infty_{-1/2}\rightarrow H^\infty_{-5/2}$ yields
\[
\breve\nu\hat\xi_C-\nu\hat{\breve\xi}_C=0,
\]
 and, moreover,
$\xi_{ABC}=0$. One concludes that $I=0$. Therefore, the spacetime is isometric to the Minkowski spacetime.

\medskip
\noindent
\textbf{Remark.}
It is of interest to note that the integral $M'$ as given by
expression \eqref{KprimDef} evaluated over a subset
$\mathcal{U}\subset\mathcal{S}$ can be interpreted the mass of that portion of the slice; it is non-negative by construction and it tends to the complete ADM mass as $\mathcal{U}$ grows to cover
$\mathcal{S}$. The drawback of this construction is that one needs to
solve $\mathbf{L}(\kappa_A)$ on the entire slice $\mathcal{S}$ 
---that is, it is a construction that needs global information.

\section{Conclusions}
In this article we have used the idea of quantifying how much a
spacetime fails to have a particular symmetry to construct a global
geometric invariant characterising initial data sets for the Minkowski
spacetime. Not surprisingly, this invariant turns out to be related to
the ADM mass of the data. The approach advocated in this article
provides expressions for the mass as $L^2$ norms of some auxiliary
spinorial fields. This suggests that variations of our approach could
be used to obtain estimates of parts of the initial data in terms of
the mass. A property of potential relevance for the discussion of the
time evolution of the invariants is the ability to switch, according
to need, between expressions for the invariants given in terms of bulk or surface
integrals. As pointed out in the introduction, the main motivation for
the analysis presented in this article is to develop intuition and a
mathematical toolkit for the analysis of similar questions for the
more complicated geometric invariants of
\cite{BaeVal10a,BaeVal10b}. These tantalising possibilities will be
analysed elsewhere.

\medskip

\section*{Acknowledgements}
We thank A Garc\'{\i}a-Parrado for valuable comments.
TB is funded by a scholarship of the Wenner-Gren foundations. JAVK was
funded by an EPSRC Advanced Research fellowship.


\end{document}